\documentclass[runningheads]{llncs}
\usepackage[T1]{fontenc}
\usepackage{booktabs}
\usepackage{tabularx}
\usepackage{hyperref}
\usepackage{csquotes}
\usepackage{amsmath}
\usepackage{cleveref}
\usepackage{amsfonts}
\usepackage{siunitx}
\usepackage{algorithm}
\usepackage{tikz}
\usetikzlibrary{patterns}
\usepackage{graphicx}
\usepackage{fontawesome5}
\usepackage{xcolor}

\newcommand*\prot{OmniSphinx}
\newcommand*\secpar{\kappa}
\newcommand*\clients{\mathcal{C}}
\newcommand*\mixnodes{\mathcal{M}}
\newcommand*\corrupted{\check{\mixnodes}}

\newcommand*\instr[1]{\texttt{#1}}
\newcommand*\byte{\mathbb{N}_{255}}
\newcommand*\bytes{\mathbb{N}_{255}^*}
\newcommand*\macSet{\mathbb{N}_{255}^\secpar}
\newcommand*\regSet{\mathcal{R}}
\newcommand*\regCnt{|\regSet|}

\newcommand*\header{\eta}
\newcommand*\payload{\delta}
\newcommand*\group{\mathcal{G}}
\newcommand*\Zg{\mathbb{N}_q}
\newcommand*\mixprog{p}
\newcommand*\pubkey{y}
\newcommand*\unidraw{\leftarrow^\$}
\newcommand*\concat{||}
\newcommand*\abs[1]{\lvert #1 \rvert}
\newcommand*\substr[2]{_{[#1 .. #2]}}
\newcommand*\xor{\oplus}
\newcommand*\pz{\phantom{0}}

\newcommand*\bad{\textit{malignant}}
\newcommand*\good{\textit{benign}}

\newcommand*\inp[1]{\textbf{Input:} #1}
\newcommand*\cmt[1]{\textcolor{gray}{\textit{\texttt{// #1}}}}

\newif\ifextended

\begin{document}

\extendedtrue

\ifextended
    \title{OmniSphinx: Active Mix Networks}
    \subtitle{Extended Version}
\else
    \title{OmniSphinx: Active Mix Networks}
\fi
%
%\titlerunning{Abbreviated paper title}
% If the paper title is too long for the running head, you can set
% an abbreviated paper title here
%
\author{
    Daniel Schadt\inst{1}\textsuperscript{*}\orcidID{0009-0009-6357-1314}
    \and
    Christoph Coijanovic\inst{1}\orcidID{0000-0002-5873-2859}
    \and
    Shabi Shabani\inst{2}\orcidID{0009-0001-7308-7587}
    \and
    Thorsten Strufe\inst{1}\orcidID{0000-0002-8723-9692}
}
\authorrunning{D. Schadt et al.}
% First names are abbreviated in the running head.
% If there are more than two authors, 'et al.' is used.
%
\institute{
    Karlsruhe Institute of Technology, Karlsruhe, Germany\\
    \email{\{daniel.schadt,christoph.coijanovic,thorsten.strufe\}@kit.edu}
    \and
    Karlsruhe Institute of Technology, Karlsruhe, Germany\\
    \email{mail@shabi-shabani.de}
}
\maketitle              % typeset the header of the contribution
\begin{abstract}
Mix networks are an important tool to implement anonymous communication, which protects not just the content but also the metadata of messages.
Over time, various packet formats for mix networks have been proposed, usually with single, specific goals in mind.
These formats are incompatible with each other, requiring separate software and infrastructure to be set up.
In this paper, we propose a new format, \emph{\prot{}}, which solves this issue.
In \prot{}, senders embed code in their packets that determines how they must be processed.
The resulting active mix network can emulate any other mix format within a single deployment.
Our empirical evaluation shows that emulation in \prot{} incurs reasonable overhead compared to native execution for typical mix network use cases:
For Sphinx, the most compact format, computation time increases by around \qty{90}{\us}, while headers increase by 33\% in size.

\keywords{Mix nets \and Anonymous communication \and Onion encryption.}
\end{abstract}

\makeatletter
\protected@xdef\@thefnmark{*}
\@footnotetext{Corresponding author}
\makeatother
\section{Introduction}
% Broad Topic
The internet has become an invaluable tool for modern communication.
Such communication, however, can  be observed by the service providers and third parties.
Many services thus employ encryption to protect the content of their communication, but metadata -- information such as \emph{who talks to whom} or \emph{when do they talk} -- is left unprotected.
This metadata alone often reveals sensitive information, for instance a frequent contact with medical professionals or specific political groups.

% Specific Topic

Mix networks hide such metadata by relaying a message over a series of intermediaries (the \emph{mix nodes}).
This provides anonymity, but comes at the cost of fixed message layouts and limited functionalities:
As each packet must be encrypted in a specific format, only communication patterns that are supported by the chosen format can be used.
This puts pressure on mix network operators to support the most popular format, and limits users that need specific functionalities like multicast traffic.
Ideally, a single mix network would not just support any format that currently exists~\cite{danezis2009,hugenroth2021,schadt2024,cryptoeprint:2024/020,rial2025outfox}, but also be easily extended for future formats without requiring coordinated infrastructure updates.

% Background
To work towards such a mix network, we take ideas from \enquote{active networking}~\cite{tennenhouse1996towards,bhattacharjee1997architecture,alexander2002switchware}, a line of research that aimed to add customization to general purpose network nodes (e.g., to implement custom compression).
A promising concept is to combine the code needed to process a packet with the payload in a self-contained structure called a \emph{capsule}.
Due to the performance penalty that this approach induces, it has not been adopted widely, especially as use-cases were lacking to justify the cost~\cite{feamster2014road}.
However, we see a good use-case in mix networks because it allows the nodes to be oblivious to the different mix formats.

% Goal
Capsules inherently cause overhead, so the goal of this work is to investigate whether this approach is still viable within the scope of mix networks:
As message decryption and shuffling already incur a higher latency than direct communication, we expect the performance penalty to be less of an issue.
At the same time, there are concrete benefits to adding such dynamicity to mix networks:
A single mix network instance can be used by clients with different requirements regarding the mix format, leading to better utilization of the instance, and potentially, a larger anonymity set.
Further, if more mix nodes provide the functionality a client needs, they can choose from a more diverse set of nodes and operators.

% Requirements & metrics to evaluate
We identify three challenges when applying active networking to mix formats:
First, we have to ensure that the resulting format is flexible enough to emulate all relevant existing mix formats.
Second, this flexibility should not come at the cost of privacy --
an emulated mix format must achieve the same privacy guarantees as the native one.
Third, the added flexibility should not incur costs that inhibit typical use cases of mix networks, such as email communication.

% Overview of solution
For our investigation, we develop \prot{}, a novel \emph{active} mix format that achieves flexibility in mix networks.
\prot{} is based on Sphinx, but extends the packet header with a \emph{mix program} for each node on the packet's path.
These mix programs are built from an instruction set that is tailored towards the required operations for existing mix formats, and they determine how the packet's payload should be processed at each node.
This allows \prot{} to emulate all relevant existing mix formats.
Our evaluation shows that \prot{} can emulate Sphinx with an overhead of 33\% in header size, and additional \qty{90}{\us} in computation time.

Finally, we recognize that \prot{} requires senders to be careful when assembling mix programs, as to not make packets traceable by accident.
To help senders in constructing \enquote{secure} mix programs, we provide \emph{information flow} rules that can be used to assess whether a mix program leaks information.

% Contributions
In summary, we make the following main contributions:
\begin{itemize}
    \item We develop \prot{}, a novel mix format that implements ideas from active networking to make mix networks flexible.
    \item We present an instruction set that allows \prot{} to emulate all existing relevant mix formats.
    \item We introduce information flow analysis to the context of anonymous communication to argue about the privacy protection of arbitrary mix programs.
    \item We provide an empirical evaluation to determine bandwidth and computational overhead of \prot{} compared to non-flexible mix formats.
\end{itemize}

This paper is structured as follows:
\Cref{sec:background} introduces the necessary background.
\Cref{sec:related} compares our goals to related work.
In \Cref{sec:model}, we introduce our assumed system and threat model.
\Cref{sec:design} gives a high-level overview of \prot{}'s design, while \Cref{sec:format} discusses the format in more technical detail.
In \Cref{sec:programs}, we describe how \prot{} can emulate Sphinx and PolySphinx.
\Cref{sec:security} discusses \prot{}'s security guarantees, while \Cref{sec:evaluation} discusses its performance impact.
Finally, \Cref{sec:conclusion} concludes this work.

\section{Background}
\label{sec:background}
In this section, we introduce the general notation as well as specific background knowledge required for this paper.

\subsection{Notation}

We denote the security parameter as \(\secpar\).
Breaking the cryptographic primitives requires work that scales exponentially in \(\secpar\).
We denote the set of possible bytes as \(\byte\), the set of byte strings of length \(l\) as \(\byte^l\), and the set of byte strings of arbitrary length as \(\bytes\).
Given \(a \in \byte^x\) and \(b \in \byte^y\), \(a \concat b \in \byte^{x+y}\) denotes the concatenation of \(a\) followed by \(b\).
If \(x = y\), \(a \xor b \in \byte^x\) denotes the bitwise xor of \(a\) and \(b\).
Further, \(a\substr{i}{j}\) with \(i \le j < x\) denotes the substring of \(a\) from \(i\) to \(j\), and \(0^i\) denotes the string of \(i\) zeroes.
We denote a Diffie-Hellman group as \(\group\).
Its order is \(q\) and its generator \(g\).

\subsection{Mix networks}

Mix networks~\cite{chaum1981} are a tool for anonymous communication.
In a mix network, Alice sends a message to Bob over a series of intermediaries (the \emph{mix nodes}), which hides their relation.
To prevent an outside observer from tracking messages as they pass through mix nodes, two techniques are employed:

First, messages are encrypted multiple times, with each mix node removing one layer of encryption.
This concept is colloquially known as onion encryption and prevents the observer from tracking messages based on their content.
A \emph{mix format} describes the exact way how messages are encrypted.

Second, messages are intentionally delayed and shuffled, either by waiting for a random time before they are forwarded, or by waiting for a certain amount of other messages.
This prevents the observer from tracking messages based on their timing or output order.
A \emph{mixing strategy} describes the exact way how messages are delayed and shuffled.

A mix network provides anonymity to Alice even against global active adversaries, as long as any one mix node in Alice's chosen path is honest.
This is called the \emph{anytrust} assumption.

\subsection{Mix formats}

The mix format describes how a sender has to prepare a message, and how a mix node has to process it.
Sphinx~\cite{danezis2009} is a prominent example of a mix format, which promises compactness and provable security.

A Sphinx packet consists of a header and the payload.
The header contains a key encapsulation \(g^x\) based on a Diffie-Hellman key exchange, which the node uses to derive a shared secret \(g^{xy}\) using its private key \(y\).
This shared secret keys a stream-cipher, which is used to decrypt the remaining header, as well as a MAC, which is used to verify the integrity of the header, as well as a block-cipher, which is used to decrypt the payload.
After decryption, the node can read the address of the next hop.
Finally, the shared secret is blinded to produce a new key encapsulation, allowing the next node to derive a fresh shared secret.

Thanks to its compactness, Sphinx has been used as the basis for later adaptions and extensions:
Beato et al. propose an improvement of Sphinx that uses authenticated encryption instead of the explicit hashes that Sphinx uses~\cite{beato2016}.
We call this variant \emph{AE-Sphinx}.
Hugenroth et al. propose \emph{MultiSphinx}~\cite{hugenroth2021}, a format that allows multiple (small) Sphinx packets to be combined into a single big packet.
This allows for a more efficient sending of multiple small packets.
Schadt et al. propose \emph{PolySphinx}~\cite{schadt2024}, a format that allows for efficient replication of messages with the same content to multiple recipients.
This allows for efficient group communication.

Outside of Sphinx, Klooß et al. propose \emph{EROR}~\cite{cryptoeprint:2024/020}, which requires weaker trust assumptions than Sphinx.
EROR includes hop-by-hop integrity protection, which prevents tagging attacks in the presence of malicious receivers.

\section{Related work}
\label{sec:related}
The goal of this work is to define a packet format for mix networks that offers flexibility to emulate existing and future mix formats.

\prot{} shares its setting and basic architecture with Sphinx~\cite{danezis2009} and Sphinx-like mix formats~\cite{hugenroth2021,schadt2024}.
Unlike \prot{}, these formats rigidly define packet processing as part of the protocol and do not offer flexibility.

Rochet et al. propose \textbf{FAN}~\cite{DBLP:conf/wpes/RochetDE24}, a protocol that enables dynamic reprogramming of Tor~\cite{DBLP:conf/uss/DingledineMS04} nodes.
FAN's goal is to increase the speed with which updates to the Tor code base propagate through the network.
FAN offers flexibility in the sense that the protocol can be easily updated.
However, as packet processing is identical for all packets and determined by the canonical code base, it does not meet our notion of flexibility.

With \textbf{Bento}~\cite{DBLP:conf/ccs/ReiningerAHFGL20,DBLP:conf/sigcomm/ReiningerAHFHGL21}, Reininger et al. extend Tor with network function virtualization.
Bento adds additional nodes to the Tor network to which users can upload code that is executed within trusted execution environments.
This can be used to augment Tor's functionality, for example with cover traffic generation or load balancing.
It does however have no impact on the actual packet processing, as it leaves the Tor nodes unchanged.

In the context of censorship circumvention, protocols like \textbf{Marionette}~\cite{dyer2015} and \textbf{Proteus}~\cite{wails2023proteus} allow proxies to be re-programmed quickly by clients.
However, the focus of those systems is to be flexible in order to evade censorship systems.
They do not process messages but instead aim to tunnel TCP streams covertly, and they lack the privacy guarantees that mix networks provide.

\section{System and threat model}
\label{sec:model}
We assume a mix network consisting of \emph{clients}, which want to send messages to each other, and \emph{mix nodes}, which are the servers providing the service.
We assume a service model~\cite{kuhn2020} for the network, meaning that the final mix node in a path forwards a message to the recipient using a non-anonymous channel.
We call the set of client addresses \(\clients\) (e.g., the set of fixed-length email addresses) and the set of mix node addresses \(\mixnodes\) (e.g., IP addresses).

We assume that a public key infrastructure exists such that each mix node \(m \in \mixnodes\) has a private key \(x_m\), which is kept secret, and a public key \(y_m = g^{x_m}\) that is known by the clients. 

We assume that the adversary is global and active, meaning that it can observe any network link between two participants, and it can insert, modify, delay and drop messages on those links.
We further assume that the adversary can control a subset \(\corrupted \subset \mixnodes\).
We call nodes in \(\corrupted\) \emph{corrupt}, and the remaining nodes \emph{honest}.
The adversary is bounded to use polynomial time in \(\secpar\).
The adversary's goal is to link sender and recipient of messages sent.
We do not aim to hide the processing logic from the adversary or from the mix nodes.

\section{Design}
\label{sec:design}
We introduce \emph{\prot{}}, a new mix format that lets senders specify how each individual packet should be processed.
In \prot{}, senders embed a \emph{mix program} for every node into their packet, and nodes execute the mix program to process the packet.
By using different mix programs, senders can \emph{emulate} existing and future mix formats.
Note that the resulting \prot{} packets are not byte-wise compatible to the emulated mix format; rather, they implement the same conceptual design, meaning they provide the same functionality and achieve the same privacy guarantees.

\prot{} is based on Sphinx~\cite{danezis2009}.
It structures a packet into two parts, the header and the payload.
The header contains the mix programs, whereas the payload contains the sender's message.
Like Sphinx, the base packet format employs onion encryption and MACs in the header to ensure integrity of the mix programs and that each mix node only learns \enquote{their} mix program.
Unlike existing mix formats, \prot{} does not specify how payloads should be handled to give senders as much flexibility as possible.
All payload processing is done according to the mix program.

Each mix program consists of a sequence of \emph{instructions}.
The available instructions are defined by the \emph{instruction set}.
The instruction set is chosen to provide a good trade-off between flexibility and overhead:
If the instructions are too low-level (e.g., x86 machine code), representing a mix program takes up many instructions and becomes inefficient.
On the other hand, if the instructions are too high-level (e.g., switching between different existing formats), then \prot{} deployments will not support new formats without updating the instruction set, which defeats its purpose.
Additionally, the instructions are chosen to limit attacks from malicious users against mix nodes, like the attempted extraction of secrets or the abuse of computation time.

Packet processing at mix nodes is divided into three stages:
\begin{itemize}
    \item During \emph{preprocessing}, the node derives the shared secret and unwraps the onion encryption in the header.
    This reveals the mix program for the current node.

    \item During \emph{program execution}, the node executes the instructions specified in the mix program.
    The program has access to the header, payload, and shared secret.
    A dedicated \instr{Forward} instruction is used to enqueue outgoing packets for transmission.

    \item During \emph{postprocessing}, the mix node ensures that each outgoing packet has the correct size, padding it if necessary.
\end{itemize}

As outgoing packets are created by the mix program, a single packet can result in no outgoing packets (e.g., to implement cover traffic), a single outgoing packet (e.g., a normal message), or multiple outgoing packets (e.g., for multicast traffic).

\section{Implementation}
\label{sec:format}
In this section, we define the \prot{} format in detail.

\subsection{Cryptographic primitives}

We make use of a number of cryptographic primitives in our implementation of \prot{}:
We use a secure pseudorandom generator, \(\rho : \byte^\secpar \rightarrow \bytes\), which takes as input a seed and outputs arbitrarily many random looking bytes.
We use a MAC, which we model as a random oracle \(\mu : \byte^\secpar \times \bytes \rightarrow \byte^\secpar\), taking the input key and bytes and producing an \enquote{authentication tag.}
We use hash functions \(h_b : \group \times \group \rightarrow \Zg\) to blind group elements between hops, \(h_\rho, h_\mu : \group \rightarrow \byte^\secpar\) and \(h_\rho' : \group \times \byte \times \mathbb{N} \rightarrow \byte^\secpar\) to derive keys from group elements, and \(h_\tau : \group \rightarrow \byte^\secpar\) to identify already-seen messages.

\subsection{Instruction architecture}
\label{sec:format:instructions}

Instructions in \prot{} follow a register-based approach, where intermediate values are saved in one of \(\regCnt = 256\) registers.
Each register \(r \in \regSet\) can hold a byte string \(b \in \bytes\).
Each instruction takes a predefined number of arguments, where each argument can be either the index of a register to use as in-/output, or a constant value.

We write instructions as \instr{Name(a, b, ...)}, where \instr{Name} is the name of the instruction and \(\instr{a}, \instr{b}, \ldots \in \regSet \cup \byte\) are its arguments.
Internally, each instruction is encoded as a byte representing the operation code, followed by bytes representing the arguments.

A special case is the \instr{Load} instruction, which is used to embed byte string constants into the mix program.
The \instr{Load} instruction takes the form \instr{Load(v, d)}, where \instr{v} is the value to load and \(\instr{d} \in \regSet\) is the destination register.
Internally, \instr{Load} is encoded with a byte representing the operation code for \instr{Load}, followed by two bytes specifying the length of \instr{v}, followed by the bytes of \instr{v}, followed by a byte representing \instr{d}.

Our instruction set is based on an analysis of Sphinx and its derivatives and it is chosen so that those formats can be emulated.
A full list of the available instructions can be found in \Cref{tab:instructions}.

\begin{table}[]
    \centering
    \begin{tabularx}{\textwidth}{lX}\toprule
         Instruction & Description\\\midrule
         \(\instr{Exponent}(b,e,d)\) & Computes \(b^e\) and stores the result in \(d\)\\
         \(\instr{ConcatByte}(a,b,d)\) & Concatenates the byte value of \(b\) to \(a\) and stores the result in \(d\)\\
         \(\instr{Concat}(a,b,d)\) & Concatenates the content of \(b\) to \(a\) and stores the result in \(d\)\\
         \(\instr{IsEqual}(a,b)\) & Checks if \(a\) and \(b\) are equal. Aborts if false.\\
         \(\instr{CutBytes}(a,l,d)\) & Extracts the first \(l\) bytes of \(a\) and stores the result in \(d\)\\
         \(\instr{XOR}(a,b,d)\) & Computes the bitwise XOR of \(a\) and \(b\) and stores the result in \(d\)\\
         \(\instr{Add}(a,b,d)\) & Computes the sum of \(a\) and \(b\) and stores the result in \(d\)\\
         \(\instr{Copy}(s,d)\) & Copies the content of \(s\) to \(d\)\\
         \(\instr{Pad}(a,l,d)\) & Appends \(l\) \enquote{0}s to \(a\) and stores the result in \(d\)\\
         \(\instr{Load}(s,d)\) & Loads the constant or intermediate value \(s\) into \(d\)\\
         \(\instr{CreateZeroes}(l,d)\) & Stores \(l\) \enquote{0}s in \(d\)\\\addlinespace
         \(\instr{PRG}(s,l,d)\) & Expands seed \(s\) into a pseudorandom string of length \(l\) and stores the result in \(d\)\\
         \(\instr{Hash}(s,d)\) & Applies the hash function to \(s\) and stores the result in \(d\)\\
         \(\instr{Encrypt}(a,b,d)\) & Encrypts plaintext \(b\) under key \(a\) and stores the result in \(d\)\\
         \(\instr{Decrypt}(a,b,d)\) & Decrypts ciphertext \(b\) under key \(a\) and stores the result in \(d\)\\
         \(\instr{MAC}(a,b,d)\) & Computes a MAC of \(b\) using key \(a\) and stores the result in \(d\)\\\addlinespace
         \(\instr{ForLoop}(a,b)\) & Repeat the next \(a\) instructions \(b\) times\\
         \(\instr{Forward}(a)\) & Forward the computed data to \(a\)\\
         \(\instr{Stop}()\) & Terminate the program execution\\\bottomrule
    \end{tabularx}
    \caption{The \prot{} instruction set.}
    \label{tab:instructions}
\end{table}

\subsection{Packet structure}
\label{sec:format:packet}

We consider a packet to consist of two parts:
A \emph{header} \(\header\) and a \emph{payload} \(\payload\).
\prot{} makes no assumptions about the payload and treats it as an opaque block of bytes \(\delta \in \byte^l\).
The header has a pre-defined structure and contains all information required to process the packet at a mix node.

An \prot{} header consists of three parts: \(\header = (\alpha, \beta, \gamma)\).
Here, \(\alpha \in \group\) is a Diffie-Hellman group element, \(\beta \in \byte^j\) contains the mix programs and message authentication tags for the nodes along the path, and \(\gamma \in \macSet\) is the message authentication tag for the first node.
We show this structure in \Cref{fig:packet-structure}.

\begin{figure}
    \centering
    \begin{tikzpicture}
        \draw (0, 0) rectangle ++(1, 0.5) node[midway] { \(\alpha\) };
        \draw (1, 0) rectangle ++(3, 0.5) node[midway] { \(\beta\) };
        \draw (4, 0) rectangle ++(1, 0.5) node[midway] { \(\gamma\) };
        \draw (5, 0) rectangle ++(4, 0.5) node[midway] { \(\delta\) };

        \draw (-1, -1) rectangle ++(4, 0.5) node[midway] { Mix program 1 };
        \draw (3, -1) rectangle ++(1, 0.5) node[midway] { \(\gamma_2\) };
        \draw[pattern=north east lines] (4, -1) rectangle ++(4, 0.5) node[midway,fill=white, inner sep=1pt] { Mix program 2 };
        \draw[pattern=north east lines] (8, -1) rectangle ++(1, 0.5) node[midway,fill=white,inner sep=1pt] { \(\gamma_3\) };
        \draw[pattern=crosshatch] (9, -1) rectangle ++(1, 0.5) node[midway,fill=white] { \textellipsis };

        \draw[thick,dashed] (1, 0) to[out=-90,in=0] (0.5, -0.25) -- (-0.5, -0.25) to[out=180,in=90] (-1, -0.5);
        \draw[thick,dashed] (4, 0) to[out=-90,in=0] (4.5,-0.25) -- (9.5,-0.25) to[out=0,in=90](10, -0.5);
    \end{tikzpicture}
    \caption{%
        Structure of an \prot{} packet.
        The lower row represents a zoomed-in view of \(\beta\), in which the hatched elements are encrypted to the current node and therefore not readable.
    }
    \label{fig:packet-structure}
\end{figure}

\subsection{Header creation}
\label{sec:format:header-creation}

We describe header creation as a function that takes as input the mix programs \(\mixprog_0\), \textellipsis, \(\mixprog_{\nu - 1}\) for each layer, as well as the public keys \(\pubkey_0\), \textellipsis, \(\pubkey_{\nu - 1}\) that are used to encrypt the onion layers.

The client samples a random \(x \unidraw \Zg^+\) and computes for each layer \(i\) the corresponding group element \(a_i\), the shared secret \(s_i\) and the blinding factor \(b_i\):
\begin{align*}
    \alpha_0 &= g^x, & s_0 &= y_0^x, & b_0 &= h_b(\alpha_0, s_0) \\
    \alpha_1 &= g^{x b_0}, & s_1 &= y_1^{x b_0}, & b_1 &= h_b(\alpha_1, s_1) \\
    \cdots & & \cdots & & \cdots & \\
    \alpha_{\nu - 1} &= g^{x b_0 b_1 \ldots b_{\nu - 2}}, & s_{\nu - 1} &= y_1^{x b_0 b_1 \ldots b_{\nu - 2}}, & b_{\nu - 1} &= h_b(\alpha_{\nu - 1}, s_{\nu - 1})
\end{align*}
Then, the client computes how the padding will look like at each node:
\begin{align*}
    \phi_0 = 0^0 \quad
    \phi_i = (\phi_{i - 1} \concat 0^{\abs{\mixprog_{i - 1}} + \secpar})
    \xor
    \rho(h_\rho(s_{i-1}))\substr{a_i}{b_i},
\end{align*}

where \(
a_i = j - \sum_{k=0}^{i-2} \left( \abs{\mixprog_k} + \secpar \right)
\) and \(
b_i = j + \abs{\mixprog_{i - 1}} + \secpar
\).

Finally, the client wraps the header from the inside out:
\begin{align*}
    \beta_{\nu - 1} &= (\mixprog_{\nu - 1} \xor \rho(h_\rho(s_{\nu - v1})))\substr{0}{\abs{\mixprog_{\nu - 1}}} \concat \phi_{\nu - 1} \\
    \gamma_i &= \mu(h_\mu(s_i), \beta_i) \\
    \hat{\beta}_i &=
        \mixprog_{i + 1} \concat
        \gamma_{i + 1} \concat
        (\beta_{i + 1})\substr{0}{j - \abs{\mixprog_{i + 1}}} \\
    \beta_i &= \hat{\beta} \xor \rho(h_{\rho}(s_i))\substr{0}{j} \\
    \gamma_i &= \mu(h_\mu(s_i), \beta_i)
\end{align*}
The resulting \(\eta = (\alpha_0, \beta_0, \gamma_0)\) is the final header, and the resulting \(s_i\) can be used to pre-process the payload.

\subsection{Header processing}
Once a mix node receives a packet, it first verifies its structure by checking that \(((\alpha, \beta, \gamma), \payload) \in (\group \times \byte^j \times \macSet) \times \byte^l\).
The node then computes the shared secret \(s = \alpha^{x}\) using its private key \(x\).
To ensure that the received packet is fresh, the node checks whether the replay tag \(h_\tau(s)\) appears in its table of tags already seen.
If the packet is not fresh, it is discarded.

Next, the node verifies the integrity of the header by computing the tag \(\gamma' = \mu(h_\mu(s), \beta)\).
If \(\gamma \ne \gamma'\), the packet is discarded.

Then, the node decrypts one layer of \(\beta\) as \(B' = \beta \oplus \rho(h_\rho(s))\) to obtain the processing logic for this hop via the mix program \(\mixprog\), and the message authentication tag \(\hat\gamma\) for the next hop.
The node does this by locating the \instr{Stop} instruction in \(B'\).
The bytes until there are \(\mixprog\), while the following \(\kappa\) bytes are \(\hat\gamma\).

Finally, the node makes the values from the header accessible to the mix program by loading them into predetermined registers.
Namely, it loads \(\alpha\), \(\beta\), \(\gamma\), \(\delta\), the processed versions of \(\alpha\), \(\beta\), and the next-hop MAC \(\hat\gamma\).

\subsection{Postprocessing}

For each outgoing packet, the node pads \(\beta\) and \(\delta\) to the correct length.
The padding is computed deterministically from the shared secret \(s\) and the index \(i\) of the outgoing packet as
\begin{align*}
    \beta' = \beta \concat \rho(h_\rho(s, s_\beta, i)) \quad\text{and}\quad
    \delta' = \delta \concat \rho(h_\rho(s, s_\delta, i)),
\end{align*}
where \(s_\beta, s_\delta \in \byte\) are arbitrarily chosen but fixed salt bytes.

\section{Mix programs}
\label{sec:programs}
In this section, we describe how \prot{} can be used to emulate existing mix formats at the example of Sphinx (\Cref{sec:programs:sphinx}) and PolySphinx (\Cref{sec:programs:polysphinx}).
We include the full mix programs in \Cref{apx:programs}.

After preprocessing, the mix node holds the the shared secret, current and next header elements, and the payload in their respective registers (see \Cref{sec:format}) and has the mix program loaded.

\subsection{Sphinx}
\label{sec:programs:sphinx}

As \prot{} closely builds upon Sphinx's package format, most of the required processing for Sphinx is already done in \prot{}'s pre- and postprocessing.
This includes the derivation of the shared secret, derivation of next headers, padding, and assembly of the outgoing packet.
As part of the mix program, an intermediate node first derives the decryption key by concatenating the shared secret with a salt, then \instr{Hash}ing the result.
After, \instr{Decrypt} is called, and finally postprocessing is initialized with a \instr{Forward} call.

If the node is an exit node, the node additionally has to extract the plaintext and receiver address from the decrypted payload.
At this point, the payload should have form \(\delta = 0^\kappa \parallel \Delta \parallel m\), where \(\Delta\) is the receiver's address and \(m\) is the message plaintext.
The node can extract the first \(\kappa\) bytes via \instr{CutBytes}, then create known zeroes with \instr{CreateZeroes}, and compare the two with \instr{IsEqual}.
Finally, the receiver address is extracted from the payload with a call of \instr{CutBytes} and message delivery is initialized with a \instr{Forward} call.

\subsection{PolySphinx}
\label{sec:programs:polysphinx}

There are three main differences between Sphinx and PolySphinx.
First, clients can include multiple next headers for nodes on the path.
If a node receives a packet with multiple headers, it uses them to send replicas of the packet onwards.
Second, to enable replication without linkability, payloads are encrypted along the path rather than decrypted.
Third, the exit node receives the key material needed to remove the encryption layers in form of a random seed that is used to generate the \enquote{key tree.}
The sender includes a distinct mix program for each type of node: simple forwarding, replication, and exit node.
\begin{itemize}
    \item The forwarding node's mix program is analogous to the Sphinx one, but replaces the \instr{Decrypt} call with an \instr{Encrypt} one.
    \item The replication node's mix program includes a \instr{For} loop with the fixed replication factor.
        Inside the loop, each next header is extracted via a series of \instr{CutBytes} calls, and then used to prepare and forward a replica of the packet.
    \item The exit node's mix program constructs the decryption keys based on the provided seed via a series of \instr{Hash} calls.
\end{itemize}

\section{Security argument}
\label{sec:security}
The following section contains our security argument for \prot{}.
Classical mix formats have provable privacy guarantees.
Through the formalized privacy properties \emph{Layer Unlinkability} (LU) and \emph{Tail Indistinguishability} (TI)~\cite{kuhn2020}, one can prove that malicious mix nodes cannot link senders to receivers.
A proof of LU and TI requires that the behavior of honest mix nodes is fixed, as is trivially the case in classical mix formats.
This is not the case in \prot{}, as the sender defines the behavior of the mix node through the instructions provided in each packet.
\prot{} thus cannot guarantee LU and TI for arbitrary mix programs.

Instead, we consider security in \prot{} in three aspects.
First, given a fixed mix program of just a \instr{Forward} instruction, we can prove that \prot{} achieves adapted versions of LU and TI.
In these adaptions, we only consider the packet header and omit the payload, as OmniSphinx treats it as a black box.
The actual proof then follows Sphinx's security proof~\cite{scherer2024a}:
The confidentiality of the shared secret follows from the Diffie-Hellman assumption, the confidentiality of the remaining header from the security of the pseudorandom generator \(\rho\), and the integrity follows from the MAC \(\mu\).
\ifextended
    We provide definitions and proofs of LU and TI below in \Cref{sec:security:base}. 
\else
    We refer to the extended version of this paper\footnote{\url{https://github.com/kit-ps/OmniSphinx/blob/esorics/Extended\%20Version.pdf}} for definitions and proofs of LU and TI.
\fi

Second, to verify that a given mix program is \enquote{secure,} we introduce \emph{information flow analysis} in the context of anonymous communication in \Cref{sec:security:infflow}.
Information flow analysis has been widely used since the 1970s to argue about the confidentiality guarantees of programs~\cite{DBLP:journals/cacm/DenningD77,sabelfeld2003language}.
As an example, we apply information flow analysis to \prot{}'s emulation of Sphinx.

Finally, as mix nodes execute programs given by users, we need to ensure that malicious users cannot harm mix nodes.
We discuss this aspect in \Cref{sec:security:node}.

\ifextended
\subsection{Base Format Unlinkability}
\label{sec:security:base}

\paragraph{Instruction Onion Routing Scheme}
We now adapt the classical onion-routing scheme to the \prot{} setting with mix programs.  
The resulting \emph{Instruction Onion Routing Scheme} captures the cryptographic behaviour of \prot{} while abstracting away from the semantics of mix programs.  
Instruction Onion Routing Scheme consists of the same high-level algorithms as classical Onion Routing Schemes, with the modifications that we introduce mix programs to the algorithms.

Formally, a Instruction Onion Routing Scheme consist of the following probabilistic polynomial-time algorithms:

\begin{itemize}
  \item $\mathsf{Gen}(1^\kappa, p, P) \to (PK, SK)$:  
  Generates a key pair for mix node $P \in N$, given the security parameter $\kappa$ and public parameters $p$.

  \item $\mathsf{FormOnionHeader}(i, \mathcal{R}, I,  \mathcal{P}, PK_\mathcal{P}) \to O_i$ with $\mathcal{P} = (P_1,\dots,P_n,\mathit{Receiver})$:  
  Used by senders to build onion headers. $i$ is the onion layer index, the case $i>1$ is relevant only in formal property definitions, not in practice.  
  $\mathcal{R}$ is the randomness for the algorithm, $\mathcal{P}$ is the path, $I$ is a list of mix programs to embed, and $PK_\mathcal{P} = (PK_1, \dots, PK_n)$ the public keys of the mix nodes in the path.

  \item $\mathsf{ProcOnionHeader}(SK, O, P) \to (O', P')$:  
  Used by mix nodes to process onion headers. $SK$ is the secret key of $P$ and $O$ is the onion header to process. In our case  $O'$ is the output onion header and $P'$ the next hop on the path. ProcOnionHeader is the preprocessing of OmniSphinx without interpreting the whole mix program. Instead of interpreting the mix program, we execute the forward instruction to the next hop, which is written in the mix program.
  In case of an error, $(O',P') = (\bot,\bot)$.  
  When processing the final onion: $\mathsf{ProcOnionHeader}(SK, O_n, P_n) \to (1,R)$.

  \item $\mathsf{ROnionHeader}(i, O, I,  \mathcal{R}, \mathcal{P}, PK_\mathcal{P}) \to b$:  
  Used in definitions to identify onion header.  
  If the header $O$ matches the header of the $i$-th onion built by $\mathsf{FormOnionHeader}$ given these parameters, then $b=1$. Otherwise $b=0$.  
\end{itemize}

\paragraph{Instruction Layer Unlinkability}
In the following, we introduce \emph{Instruction Layer Unlinkability (ILU)}, which adapts the notion of Layer Unlinkability (LU) to the setting of OmniSphinx. 
All modifications compared to \cite[Def. 4]{kuhn2020} are highlighted in \textcolor{blue}{\textit{blue italics}}.

\begin{definition}[Instruction Layer Unlinkability (ILU)]
Let {\color{blue}$(\mathsf{FormOnionHeader}, \\\mathsf{ProcOnionHeader},\mathsf{ROnionHeader})$}
be an {\color{blue}\emph{instruction}} onion routing scheme. We say that the {\color{blue}\emph{instruction}} onion routing scheme satisfies \emph{Instruction Layer Unlinkability (ILU)} if no probabilistic polynomial-time adversary $\mathcal{A}$ can win the following game with more than negligible advantage:

\begin{enumerate}
    \item \textbf{Setup:} The challenger generates key pairs for the honest participants:  
\[
(SK_s, PK_s) \leftarrow \mathsf{Gen}(1^\kappa) \quad \text{for the sender } P_s,
\]  
\[
(SK_H, PK_H) \leftarrow \mathsf{Gen}(1^\kappa) \quad \text{for the honest mix node } P_H,
\]  
The adversary $\mathcal{A}$ is given the public keys and the identities of the nodes $P_s, P_H$.  $\mathcal{A}$ may generate keys for all other nodes of its choice.  

    \item \textbf{Oracle access:} $\mathcal{A}$ may submit any number of onion headers $O$ of its choice for a \textsf{Proc} request of $P_H$ or $P_s$ to the challenger. When asked for \textsf{Proc}$(P_H, O)$  
    the challenger checks whetever the header $\eta$ is on the $\eta_H$-list. If not, it responds with  {\color{blue}{$\textsf{ProcOnionHeader} (SK_H, O, P_H)$}} and stores $\eta$ on the $\eta_H$-list.  (Similar for requests to $P_s$)
    
    \item \textbf{Challenge:} $\mathcal{A}$ submits:
    \begin{itemize}
        \item a path $P = (P_1, \dots, P_j, \dots, P_{n}, R)$ with $P_j = P_H$ and a receiver $R$.
        \item public key $(PK_i)$ for all nodes  $P_i$ on the path with $i \ne j$,
        \item \color{blue}{\textit{a set of mix programs $I = (I_1, \dots, I_{n})$ to be embedded in the packet.}}
    \end{itemize}

    \item \textbf{Verification:} The challenger checks that the chosen paths are acyclic, the mix node identifiers are valid and that the same key is chosen if the mix node identifiers are equal, then samples a random bit $b \in \{0,1\}$.

    \item \textbf{Onion Construction:}  
    The challenger creates:
    \begin{itemize}
        \item an onion header $O_1\leftarrow \color{blue}\textsf{FormOnionHeader}(1, \mathcal{R}, I, \mathcal{P}, \mathsf{PK}_\mathcal{P})$ with the adversary’s path and instructions,  
        \item a random onion header $\bar{O}_1 \leftarrow \color{blue}\textsf{FormOnionHeader}(1, \mathcal{R}, \bar{I}, \bar{\mathcal{P}}, \mathsf{PK}_{\bar{\mathcal{P}}})$, with the first part of the path and a random receiver $\bar{R}$: $\bar{\mathcal{P}} = (P_1, \dots , P_j, \bar{R})$, \color{blue}{\textit{and the first part of the mix programs $\bar{I} = \{ I_1, \dots, I_{j+1}\}$}}.
    \end{itemize}

    \item \textbf{Challenge Output:}  
    \begin{itemize}
        \item If $b=0$: the challenger gives $O_1$ to $\mathcal{A}$.
        \item If $b=1$: the challenger gives $\bar{O}_1$ to $\mathcal{A}$.
    \end{itemize}

    \item \textbf{Post-Challenge Oracle Access:}  
    If $b=0$, the challenger answers all oracle queries exactly as in step~2.  

    If $b=1$, the challenger answers all oracle queries as in step~2, 
    \emph{except} in the following cases, where the challenger simulates fresh onions instead of processing the challenge onion:  

\begin{itemize}
    \item If $j < n$:  
    For a query $\mathsf{Proc}(P_H, O = \eta)$ such that $\eta$ not in $\eta_H$ , 
    \[
    {\color{blue}\mathsf{ROnionHeader}(j, O, \bar{\mathcal{R}}, \bar{I}, \bar{\mathcal{P}}, \mathsf{PK}_{\bar{\mathcal{P}}})}
    = \textsf{True}
    \quad \text{and} 
    \]
    \[
    {\color{blue}\mathsf{ProcOnionHeader}(SK_H, O, P_H)} \neq (\bot,\bot),
    \]
    the challenger outputs $(P_{j+1}, O_c)$ where  
    \[
        O_c \leftarrow \mathsf{FormOnionHeader}(1, \mathcal{R}', I',  \mathcal{P}', \mathsf{PK}_{\mathcal{P}'})
    \]  
    with honestly chosen randomness $R'$, {\color{blue}instructions $I' = (I_{j+1}, \dots, I_n)$}, and path $P' = (P_{j+1}, \dots, P_n, R)$ and adds $\eta$ to the $\eta_H$-list.  

    \item If $j = n$:  
    For a query $\mathsf{Proc}(P_H, O =\eta)$ such that $\eta$ not in $\eta_H$ , 
    \[
    {\color{blue}\mathsf{ROnionHeader}(j, O, \bar{\mathcal{R}}, \bar{I}, \bar{\mathcal{P}}, \mathsf{PK}_{\bar{\mathcal{P}}})} = \textsf{True} 
        \quad \text{and} 
    \]
    \[
        {\color{blue}\mathsf{ProcOnionHeader}(SK_H, O, P_H)} \neq (\bot,\bot),
    \]  
    the challenger outputs $(1,\bot)$ and adds $\eta$ to the $\eta_H$-list.  
\end{itemize}

    \item \textbf{Guess:} $\mathcal{A}$ outputs a guess $b'$.
\end{enumerate}

ILU is achieved if for all PPT adversaries $\mathcal{A}$:
\[ \left| \Pr[b = b'] - \frac{1}{2} \right| \leq \textsf{negl}(\kappa) \]
\end{definition} 

To demonstrate that OmniSphinx satisfies Instruction Layer Unlinkability, we follow the common approach of constructing a sequence of hybrid games. 
Starting from the real ILU game with $b = 0$, we gradually transform the adversary’s view until it becomes indistinguishable from the $b=1$ scenario, where all information about the input onion has been replaced by random values. 
Each hybrid step is justified by the cryptographic properties of the primitives used in OmniSphinx.
Together, these arguments establish that the adversary cannot achieve a non-negligible advantage, thus proving ILU for the meta-construction.
This proof closely follows and adapts Kuhn et al.'s Layer Unlinkability proof of Sphinx~\cite{kuhn2020}.

\begin{theorem}
OmniSphinx achieves Instruction Layer Unlinkability against a PPT adversary under the GDH assumption.
\end{theorem}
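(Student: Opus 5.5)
We prove the theorem with a sequence of hybrid games. We start from the real ILU game with \(b=0\) and end at the game with \(b=1\). Each transition is bounded by GDH, the pseudorandomness of \(\rho\), or the unforgeability of \(\mu\) in the random oracle model, following Kuhn et al.'s LU proof for Sphinx~\cite{kuhn2020}. The one new ingredient is that the per-hop payload inside \(\beta\) is a variable-length mix program \(\mixprog_i\) rather than a fixed-length address. We therefore first check that the header length \(j\) and the padding strings \(\phi_i\) depend only on the lengths \(\abs{\mixprog_k}\). Since the challenge fixes \(I_1,\dots,I_{j+1}\) identically in both worlds, the layers up to and including the honest node are produced from the same programs. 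The only differences are the (encrypted) content beyond \(P_H\) and the randomness.

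\textbf{Hybrid 1 (shared secret).} We replace the shared secret \(s_j\) of the honest node \(P_H\) with a uniformly random group element \(s^*\), and we consistently replace every derived key: \(h_\rho(s_j)\), \(h_\mu(s_j)\), \(h_\tau(s_j)\), \(h_b(\alpha_j,s_j)\), and the postprocessing padding seeds. The oracle \(\mathsf{Proc}(P_H,\cdot)\) must still answer queries whose \(\alpha\) equals \(\alpha_j\). The reduction receives \((g^x, y_H)\) and embeds \(\alpha_j\) as a rerandomization of \(g^x\), using the known blinding factors of the earlier layers. It answers Proc queries on other \(\alpha\) through the DDH oracle supplied by the GDH assumption, by checking random-oracle queries \(h(\alpha, Z)\) for \(\mathrm{DDH}(\alpha,y_H,Z)\), exactly as in the Sphinx proof. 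An adversary that distinguishes must query a random oracle at \(s_j\), which yields a CDH solution. This step is the main obstacle. Lazy sampling of the random oracles has to stay consistent with Proc answers on adversarially mauled \(\alpha\), and this must also cover the postprocessing key \(h_\rho'(s,s_\beta,i)\). I would handle this by programming all hash functions as random oracles keyed on \(Z\) and patching them with the DDH oracle.

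\textbf{Hybrid 2 (MAC and replay).} With \(h_\mu(s^*)\) now uniform, any Proc query to \(P_H\) that carries \(\alpha_j\) with \((\beta,\gamma)\neq(\beta_j,\gamma_j)\) is rejected, except with negligible forgery probability. So the only accepted query is the challenge header itself, and only once because of the \(\eta_H\)-list and the replay tag.

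\textbf{Hybrid 3 (pseudorandomness).} We replace \(\rho(h_\rho(s^*))\) by uniformly random bytes. Then \(\beta_j\) and \(\gamma_j\) are uniform strings, and \(\beta_j\) no longer depends on \(\mixprog_{j+1},\dots\) or on the later path. The blinded element \(\alpha_{j+1}=\alpha_j^{h_b(\cdot)}\) becomes a fresh uniform group element, since the blinding factor is a fresh random-oracle output. Hence the output of \(P_H\) on the challenge is distributed exactly like a freshly formed onion \(\mathsf{FormOnionHeader}(1,\mathcal{R}',I',\mathcal{P}',\cdot)\); this uses the fact that the padding \(\phi\) now consists of random bytes.

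\textbf{Unwinding.} We then apply the symmetric steps in reverse, now with a truncated path \(\bar{\mathcal P}\), random receiver \(\bar R\), and programs \(\bar I\). This reaches the \(b=1\) game, in which the challenger simulates fresh onions. For the case \(j=n\) the same argument applies, except that the final output is \((1,\bot)\) and the receiver and program content are hidden by the same pseudorandom mask. Summing the negligible losses over a constant number of hybrids gives the theorem.
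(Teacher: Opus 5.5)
Your proposal is correct and follows essentially the same route as the paper, a hybrid argument adapted from Kuhn et al.'s Sphinx LU proof: you randomize the honest node's derived keys under GDH, use MAC unforgeability (plus the $\eta_H$-list) to restrict $\mathsf{Proc}(P_H,\cdot)$ to the exact challenge header, replace the PRG mask by randomness and rerandomize $\alpha$ so the honest node's output looks like a fresh onion, and then unwind toward the truncated-path onion. Your Hybrid 3 bundles the paper's $H_4$--$H_7$ and your unwinding plays the role of $H_8$--$H_{12}$; one precision point is that after Hybrid 3 only the non-padding part of $\beta_j$ is uniform, while its tail is still the filler $\phi_j$ fixed by the corrupt predecessors' secrets (the paper keeps it explicitly in $H_5$), but since that filler depends only on the prefix and the program lengths, your meeting-point argument is unaffected.
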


\begin{proof}

We construct a sequence of hybrids $H_0,\dots,H_{11}$ to show that the real game ($H_0$) is computationally indistinguishable from the ideal game ($H_{11}$). 

\begin{description}
  \item[Hybrid $H_0$:] The real ILU game with $b=0$.

    \item[Hybrid $H_1$:] 
    In this hybrid, all keys derived from $s_j$ at the honest node $P_j$ (e.g., the PRG key $h_\rho$ for $\beta_j$, the MAC key $h_\mu$ for $\gamma_j$ and the blinding factor $h_b$ for $\alpha_j$) are replaced with uniform random strings. 
    
    \emph{Argument} $H_0 \approx H_1$: $\alpha$ in OmniSphinx is constructed in the same way as in Sphinx. Scherer et al. formally proved that this step is indistinguishable under the Gap Diffie–Hellman (GDH) assumption in the random oracle model \cite{scherer2024a}.

    \item[Hybrid $H_2$:] 
    The \textsf{Proc} oracle returns $\bot$ on every request with $\alpha_{j-1}$ in its header, except if the rest of the header also matches the expected header of the challenge onion. This ensures that only the challenge onion is recognized for challenge processing by the honest node. 
    
    \emph{Argument} $H_1 \approx H_2$: If an adversary can distinguish between $H_1$ and $H_2$, we can construct an adversary $\mathcal{A'}$ that breaks the sEUF-CMA security of the MAC $\mu$, which is modeled as a secure PRF. Any valid \textsf{Proc} request with $\alpha_{j-1}$ in its header must include a valid tag $\gamma_{j-1}$. To notice a difference between the two hybrids, a distinguisher must submit such a request with a modified $\gamma$. Such a request directly constitutes a MAC forgery, contradicting the assumed security of $\mu$.

    \item[Hybrid $H_3$:]
    In the honest node challenge processing, $H_3$ always produces the same challenge header (the one belonging to the challenge onion's layer $O_j$) without actually processing the header input the relay is given.
    
    \emph{Argument} $H_2 \approx H_3$: In $H_1$, the honest node only performs the challenge processing steps on headers that matche the challenge header exactly. Thus, both hybrids always output the identical challenge header for the challenge onion.

    \item[Hybrid $H_4$:] 
    Replace the PRG output $\rho(h_\rho(s_{j-1}))$ used to mask $\beta_j$ with uniform random bits. 
    
    \emph{Argument} $H_3 \approx H_4$: If a distinguisher $\mathcal{D}$ can tell apart $H_2$ and $H_3$, we can construct an adversary $\mathcal{A}$ that breaks the pseudorandomness of the PRG. Since the output $\rho$ is only used as a one-time mask for $\beta_j$ and oracle queries with the challenge header are restricted (from $H_2$), distinguishing $\rho$ from uniform directly contradicts the pseudorandomness of the PRG.

    \item[Hybrid $H_5$:] 
    In $\beta_{j-1}$, replace all non-padding fields (the mix program $\textsf{mp}_{j-1}$, the tag $\gamma_j$, etc.) with uniform random strings, while keeping the filler $\phi_j$ unchanged. The replacement is a mix program that has a forward to the receiver $R$, making $P_j$ the last node on the path.
    
    \emph{Argument} $H_4 \approx H_5$: 
    Since the mask $\rho(h_\rho(s_{j-1}))$ is uniform and used only once, the encryption of these fields is equivalent to a one-time pad. Thus, a distinguisher between $H_3$ and $H_4$ could be transformed into an adversary that breaks the IND-CPA security of the OTP.
 
    \item[Hybrid $H_6$:] 
    In $H_4$, the challenge onion contains the nested padding $\Phi_0, \dots, \Phi_{j-1}$ inside $\beta_j$'s filler $\Phi_j$. In $H_5$, we replace $\Phi_j$ with a uniform random string of the same length. 
    
    \emph{Argument} $H_5 \approx H_6$: 
    In $H_4$, the filler string $\Phi_j$ is computed as 
    $\Phi_j = \rho(h_\rho(s_{j-1})) \oplus \{\Phi_{j-1} \| 0_{|I_j|}\}$. 
    Since $\rho(h_\rho(s_{j-1}))$ is uniform (by $H_3$), $\Phi_j$ is indistinguishable from a random string. 
    Thus, a distinguisher between $H_4$ and $H_5$ could be used to build an adversary against the IND-CPA security of the one-time pad.

    \item[Hybrid $H_7$:] 
    Resample the DH element $\alpha_j$: choose a fresh $x' \leftarrow \mathbb{Z}_q^*$ at random, set $\alpha_j := g^{x'}$ and $s_j := y_j^{x'}$. 
    
    \emph{Argument} $H_6 \approx H_7$: In $H_1$ we already randomized $b_{j-1}$ into a uniformly distributed element of $\mathbb{Z}_q^*$. Before $H_6$ we have $\alpha_j = \alpha_{j-1}^{b_{j-1}}$. Since $b_{j-1}$ is uniform, $\alpha_{j-1}^{b_{j-1}}$ is identically distributed to $g^{x'}$ for fresh $x' \leftarrow \mathbb{Z}_q^*$. The same reasoning applies to all later $\alpha$-values and secrets.

    \item[Hybrid $H_8$:] 
    Construct fresh prefix layers $O'_0,\dots,O'_{j-1}$ following the same path and instructions as the original $O_0,\dots,O_{j-1}$, and attach them before $O_j$, such that $\beta'_{j-1}$ is formed with $\beta_j$ in its contents. Randomize all random oracle outputs. Set $\alpha_j = (\alpha'_{j-1})^{b'_{j-1}}$, with the corresponding secrets derived analogously.
    
    \emph{Argument} $H_7 \approx H_8$: The newly generated layers $O'_0,\dots,O'_{j-1}$ are never revealed to the adversary, hence their construction is invisible except for the way $\alpha_j$ is computed. However, as argued in $H_5 \approx H_6$, $\alpha_j$ has the same distribution in both hybrids.

    \item[Hybrid $H_9$:] Replace $\rho(h_\rho(s_{j-1}'))$ with a random string. 
  
    \emph{Argument} $H_8 \approx H_9$: See $H_2 \approx H_3$.

    \item[Hybrid $H_{10}$:] 
    Replace the filler $\phi_j$ with the canonical PRG-derived padding, so that $\Phi_j$ is formed as the $j$-th layer of padding in $O'_0,\dots,O_j,\dots,O_{n-1}$, i.e., 
    $\Phi_j = \rho(h_\rho(s'_{j-1})) \oplus \{\Phi_{j-1} \| 0_{|I_j|}\}$. 
    
    \emph{Argument} $H_9 \approx H_{10}$: In $H_5$, $\Phi_j$ was replaced with a uniform random string. So this argument is analogous to $H_4 \approx H_5$.

  \item[Hybrid $H_{11}$:] 
  Replace the randomized oracle outputs for $s'_{j-1}$ with the actual outputs $h_*(s'_{j-1})$. 
  
  \emph{Argument} $H_{10} \approx H_{11}$: See $H_0 \approx H_1$.

  \item[Hybrid $H_{12}$:] 
  Rewind all temporary modifications made in the previous hybrids in the reverse order: $H_8, H_6, H_2, H_1$.
  
  \emph{Argument} $H_{11} \approx H_{12}$: Apply the previous arguments in reverse.
\end{description}

By transitivity of indistinguishability, $H_0 \approx H_{12}$, proving that \prot{} satisfies ILU.

\end{proof}

\paragraph{Instruction Tail Indistinguishability}
In the following, we introduce \emph{Instruction Tail Indistinguishability (ITI)}, which adapts the notion of Tail Indistinguishability (TI) to the setting of OmniSphinx. 
All modifications from \cite[Def. 3]{kuhn2020} are highlighted in \textcolor{blue}{\textit{blue italics}}.

\begin{definition}[Instruction Tail Indistinguishability (ITI)]
Let {\color{blue}$(\mathsf{FormOnionHeader}$, \\ $\mathsf{ProcOnionHeader}, $$\mathsf{ROnionHeader})$} be an {\color{blue}\emph{instruction}} onion routing scheme. We say that the {\color{blue}\emph{instruction}} onion routing scheme satisfies \emph{Instruction Tail Indistinguishability (ITI)} if no probabilistic polynomial-time adversary $\mathcal{A}$ can win the following game with more than negligible advantage:

\begin{enumerate}
\item \textbf{Setup:} The challenger generates key pairs for the honest participants:  
\[
(SK_s, PK_s) \leftarrow \mathsf{Gen}(1^\kappa, \mathcal{P}, P_s) \quad \text{for the sender } P_s,
\]  
\[
(SK_H, PK_H) \leftarrow \mathsf{Gen}(1^\kappa, \mathcal{P}, P_H) \quad \text{for the honest mix node } P_H,
\]  
The adversary $\mathcal{A}$ is given the public keys and the identities of the nodes $P_s, P_H$.  $\mathcal{A}$ may generate keys for all other nodes of its choice.  

    \item \textbf{Oracle access:} $\mathcal{A}$ may submit any number of onion headers $O$ of its choice for a \textsf{Proc} request of $P_H$ or $P_s$ to the challenger. When asked for \textsf{Proc}$(P_H, O)$  
    the challenger checks if the header $\eta$ is on the $\eta_H$-list. If not, it responds with  {\color{blue}{$\textsf{ProcOnionHeader} (SK_H, O, P_H)$}} and stores $\eta$ on the $\eta_H$-list.  (Similar for requests to $P_s$)

\item \textbf{Challenge:} $\mathcal{A}$ submits:
\begin{itemize}
    \item a path $\mathcal{P} = (P_1, \dots, P_j, \dots, P_{n}, R)$ with $P_j = P_H$ honest and a Receiver $R$,
    \item public keys $(PK_i)$ for all nodes $P_i$ on the path with $i \ne j$,
    \item \color{blue}{\textit{a set of mix programs $I = (I_1, \dots, I_{n})$ to be embedded in the packet.}}
\end{itemize}

\item \textbf{Verification:} The challenger checks that the chosen path is acyclic, the mix node identifiers are valid and that the same key is chosen if the mix node identifiers are equal, then samples a random bit $b \in \{0,1\}$.

\item \textbf{Onion Construction:}  
The challenger creates:
\begin{itemize}
    \item an onion header $O_{j+1} \leftarrow \color{blue}\textsf{FormOnionHeader}(j+1, \mathcal{R}, I, \mathcal{P}, \mathsf{PK}_\mathcal{P})$ with the adversary’s path, instructions and honestly chosen randomness $\mathcal{R}$,
    \item a random onion header $\bar{O}_1 \leftarrow \color{blue}\textsf{FormOnionHeader}(1, \bar{\mathcal{R}}, \bar{I}, \bar{\mathcal{P}}, \mathsf{PK}_{\bar{\mathcal{P}}})$, with the path from the honest mix node $P_H$ to the corrupted final mix node $\bar{\mathcal{P}} = (P_{j+1}, \dots, P_n, R)$, {\color{blue}$\bar{I} = (I_{j+1}, \dots, I_{n+1})$}, and honestly chosen randomness $\bar{\mathcal{R}}$.
\end{itemize}

\item \textbf{Challenge Output:}  
\begin{itemize}
    \item If $b=0$: the challenger gives $O_{j+1}$ to $\mathcal{A}$.
    \item If $b=1$: the challenger gives $\bar{O}_1$ to $\mathcal{A}$.
\end{itemize}

\item \textbf{Post-Challenge Oracle Access:}  
The challenger continues to answer oracle queries exactly as in step~2, independently of the bit $b$.

\item \textbf{Guess:} $\mathcal{A}$ outputs a guess $b'$.
\end{enumerate}

ITI is achieved if for all PPT adversaries $\mathcal{A}$:
\[
\left| \Pr[b = b'] - \tfrac{1}{2} \right| \leq \mathsf{negl}(\kappa).
\]
\end{definition}

Just as in Instruction Layer Unlinkability, we demonstrate that OmniSphinx satisfies Instruction Tail Indistinguishability by following the common approach of constructing a sequence of hybrid games. Starting from the real ITI game with $b = 0$, we gradually transform the adversary’s view until it becomes indistinguishable from the $b=1$ scenario. We gradually transform the $O_j$ onion header into the $\bar{O}_0$ onion header in successive hybrids. This proof closely follows and adapts Kuhn et al.'s Tail Indistinguishability proof of Sphinx~\cite{kuhn2020}.

\begin{theorem}
\prot{} achieves Instruction Layer Unlinkability against any PPT adversary under the GDH assumption.
\end{theorem}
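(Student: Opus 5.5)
The statement is word-for-word the earlier theorem that \prot{} achieves Instruction Layer Unlinkability under the GDH assumption. My plan is therefore to prove exactly this ILU claim. The quickest route is to invoke the earlier theorem directly: the ILU game, the scheme $(\mathsf{FormOnionHeader},\mathsf{ProcOnionHeader},\mathsf{ROnionHeader})$ and the assumption are unchanged. I would still reorganise the argument into a self-contained hybrid sequence from the $b=0$ game to the $b=1$ game, so that each step is justified for the ILU game as defined. I would present this as a cleaner rendering of the same reasoning, not as a new result.

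The hybrid order would be as follows.
\begin{enumerate}
\item Replace everything derived from the honest node's shared secret $s_j$ (the keys $h_\rho(s_j)$, $h_\mu(s_j)$ and the blinding factor $h_b(\alpha_j,s_j)$) by uniform values. This is justified by GDH in the random oracle model, following Scherer et al.~\cite{scherer2024a}. The DDH-oracle part of GDH lets the reduction answer $\mathsf{Proc}(P_H,\cdot)$ queries consistently without knowing $SK_H$.
\item Make the $\mathsf{Proc}$ oracle reject every query that reuses the challenge group element but differs from the challenge header. Any accepted query of this kind would be a forgery for $\mu$ under a uniform key.
\item Replace the pseudorandom mask on $\beta$ at the honest hop by a uniform string, using PRG security. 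The mix programs and tags for hops $j+1,\dots,n$ are then hidden by a one-time pad, and so is the filler $\phi$. This holds regardless of what the adversary-chosen programs $I$ contain, because \prot{} treats them as opaque bytes inside $\beta$.
\item Resample the outgoing group element $\alpha_{j+1}=g^{x'}$ with fresh $x'$. This is possible because the blinding factor is now uniform in $\Zg^+$, which also decouples the suffix of the onion from its prefix.
\item Observe that the view now matches the $b=1$ game: the prefix onion $\bar O_1$ ends at $P_j$ with programs $I_1,\dots,I_{j+1}$, and the oracle outputs a freshly formed onion $O_c$ for the tail $(P_{j+1},\dots,P_n,R)$ with programs $I_{j+1},\dots,I_n$.
\item Undo the temporary modifications in reverse order, using the same assumptions.
\end{enumerate}

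The main obstacle is the length handling specific to \prot{}. Mix programs have variable length $\abs{\mixprog_i}$, so the filler $\phi_i$ and the slice offsets $a_i,b_i$ depend on the adversary's programs. The simulated prefix $\bar O_1$ must therefore contain a filler of exactly the same length and distribution as in the real onion, otherwise $\abs{\beta}$ or the layout of the padding region would distinguish the worlds. I would handle this first by fixing $j$ and $\secpar$, and checking that the total $\sum_{k\le j}(\abs{\mixprog_k}+\secpar)$ agrees in both games. The games share $I_1,\dots,I_{j}$, and $I_{j+1}$ is placed identically. Given that, the filler at the honest hop is a uniform string of a length known to both games, and the one-time-pad step goes through.

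A secondary care point is that $\mathsf{ProcOnionHeader}$ executes the \instr{Forward} found in the decrypted program. I will check that in the $b=1$ world the oracle's special-case output $(P_{j+1},O_c)$ matches the next hop that the real program would have forwarded to. This holds because $I_j$ is the same adversary-chosen program in both worlds.
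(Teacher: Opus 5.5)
\paragraph{The statement is a misprint, and you proved the wrong property.} The theorem appears immediately after the definition of Instruction Tail Indistinguishability. The sentence introducing it says the authors will now show that \prot{} satisfies ITI, and the proof that follows ends with ``Thus \prot{} satisfies ITI.'' You noticed the wording duplicates the earlier theorem but took it at face value. Your plan therefore re-runs the ILU hybrids: GDH in the random oracle model for values derived from the honest secret, MAC-forgery rejection in the oracle, PRG mask to one-time pad, resampling $\alpha$, then rewinding. That is the paper's first proof, and for the literal claim, citing the earlier theorem already suffices. It does not address the property this theorem is about.

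\paragraph{The missing tail argument.} What you need to show is that the layer $O_{j+1}$ leaving the honest hop is distributed like a freshly formed first-layer onion $\bar O_1$ for $(P_{j+1},\dots,P_n,R)$. The paper does this without any oracle modification or MAC step, since the post-challenge oracle in ITI does not depend on $b$. Its hybrids are:
\begin{itemize}
\item resample the exponent so the group element is $g^{x'}$ for fresh $x'$ (DH argument);
\item replace $h_\rho$ of the honest node's shared secret by a uniform key (random oracle);
\item replace the PRG mask by a uniform string, so that the accumulated filler $\Phi$ becomes uniform;
\item the step specific to this theorem: extend the random portion of the innermost $\beta_{n-1}$ by $\sum_{k\le j}(\abs{\mixprog_k}+\secpar)$ fresh random bytes and truncate $\Phi_{n-1}$ by the same amount;
\item stop generating the prefix layers, which no longer appear in the view.
\end{itemize}

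\paragraph{Where your length argument breaks.} You argue that $\sum_{k\le j}(\abs{\mixprog_k}+\secpar)$ agrees in both worlds because they share $I_1,\dots,I_j$. In ITI the two worlds do not share a prefix. The same is true on the output side of your own step~5, where you simply assert that the honest node's output matches the fresh $O_c$. The real layer carries that many bytes of prefix filler, and a fresh tail onion carries none. So you cannot match lengths. You must instead show that this filler is uniform under the honest node's mask, and can therefore be traded for the fresh onion's random padding. Without that step, your proposal does not establish ITI, and even its own step~5 is unsupported.
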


\begin{proof}
We define hybrids $H_0,\dots,H_5$ and show that the real experiment $H_0$ is
computationally indistinguishable from the truncated-onion experiment $H_5$.

\begin{description}

\item[Hybrid $H_0$:] 
The real ITI game with challenge bit $b=0$.

\item[Hybrid $H_1$:] 
If $j=0$, no prefix exists and we are already in the truncated case.  
Otherwise, when constructing $O_j$, we randomise the Diffie-Hellman component
by sampling a fresh exponent $x' \in \mathbb{Z}_q^*$ and setting
\[
   \alpha_j := g^{x'}, \qquad s_j := y_j^{x'}.
\]

\emph{Argument $H_0 \approx H_1$:}  
Follows from the DH/KEM security argument used in Hybrid~$H_7$ of the original ILU proof.

\item[Hybrid $H_2$:] 
When generating the \prot{} header, we replace $h_\rho(s_{j-1})$ with a
uniform $\kappa$-bit string.

\emph{Argument $H_1 \approx H_2$:}  
Since $s_{j-1}$ is indistinguishable from a random group element, its hash in the random oracle is uniform.

\item[Hybrid $H_3$:] 
Replace $\rho(h_\rho(s_{j-1}))$ with a uniformly random string of equal length.  
Consequently, the value $\Phi_j$, which is XORed against this output, becomes uniform as well.

\emph{Argument $H_2 \approx H_3$:}  
Analogous to Hybrid~$H_4$ of the ILU proof; PRG outputs can be replaced by uniform strings.

\item[Hybrid $H_4$:] 
When constructing $\beta_{n-1}$, we extend its random portion by
\[
   \sum_{k=0}^{j} |\textsf{mp}_k| + \kappa
\]
fresh random bits and truncate $\Phi_{n-1}$ by the same amount.

\emph{Argument $H_3 \approx H_4$:}  
This redistribution only moves already-uniform randomness between header fields
and therefore leaves the distribution unchanged.

\item[Hybrid $H_5$:] 
We no longer generate any of the prefix layers
\[
   \alpha_0,\dots,\alpha_{j-1},\ 
   \beta_0,\dots,\beta_{j-1},\ 
   \gamma_0,\dots,\gamma_{j-1},\ 
   \Phi_0,\dots,\Phi_{j-1}.
\]

\emph{Argument $H_4 \approx H_5$:}  
These values are unused in $H_4$ and never appear in the adversary's view, so omitting them is undetectable.

\end{description}

In $H_5$, the packet is distributed exactly as in the $b=1$ case of ITI: all
dependencies on the prefix have been eliminated.  
Since each transition is indistinguishable, we conclude $H_0 \approx H_5$.  
Thus \prot{} satisfies ITI.
\end{proof}

\fi

\subsection{Information Flow Analysis}
\label{sec:security:infflow}
Information flow analysis tracks the observability of data during protocol execution.
Program values are classified either \good{} or \bad{} prior to execution.
Execution may change the state of a value.
The goal is to ensure that no \bad{} information is visible in the resulting packet.

For analysis, we model the mix program as a directed dependency graph, where the nodes represent values occurring during program execution.
Values are input parameters, registers initialized during preprocessing, intermediate results, or final outputs.
Each node has a corresponding security label \(\ell \in \left\{\good{}, \bad{}\right\}\).
Edges in the graph are determined by the mix program's instructions:
An instruction producing \(y \gets f(x_1,\dots,x_k)\) induces an edge from each \(x_{i\in[1,k]}\) to \(y\).
The label \(\ell(y)\) is determined by the labels \(\ell(x_i)\) for \(i\in[1,k]\) as well as the concrete instruction \(f(\cdot)\):

For structural and arithmetic instructions (\instr{XOR}, \instr{Concat}, \instr{Copy}, \textellipsis) the output retains information about the input.
Thus, if any input is \bad{}, the output is also \bad{}:
\[
    \ell(y) := \begin{cases}
        \bad{} & \text{if } \exists x_{i\in[1,k]}: \ell(x_i) = \bad{}\\
        \good{} & \text{else}
    \end{cases}
\]
The cryptographic operations \instr{Encrypt}, \instr{Decrypt}, and \instr{MAC} take as input a key and a message/ciphertext.
If the key is \bad{}, the operation ensures that the output is computationally indistinguishable from randomness and thus can be labeled \good{}.
If the key is \good{}, the operation can be re-executed by the adversary, and the output inherits the message/ciphertext's label:
\begin{align*}
    \text{For } f &\in \{\instr{Encrypt}, \instr{Decrypt}, \instr{MAC}\} \text{ with } \textit{out} \gets f(\textit{key}, \textit{in}):\\
    \ell(\textit{out}) &:= \begin{cases}
        \good{}, & \text{if } \ell(\textit{key}) = \bad{}\\
        \ell(\textit{in}) & \text{else}
    \end{cases} 
\end{align*}
Similarly, a \instr{PRF}'s output is \good{} given that its seed is \bad{}, as it computationally indistinguishable from randomness.

Within a \prot{} program, the inputs of a \instr{ForLoop} are always pre-determined by the sender.
Any program can thus be serialized for analysis.
Given a \(\instr{ForLoop}(d,i)\) instruction, serialization repeats the following \(d\) instructions \(i\) times and removes the original \instr{ForLoop} instruction.

Finally, the mix program contains no unintended information flow, if each input of any \instr{Forward} instruction the program contains is \good{}.

\paragraph{Analysis of Sphinx}
As an example of how to apply information flow analysis to mix programs, we analyze the Sphinx program for \prot{}.

The Spinx mix program is secure, if no \bad{} information is forwarded by the mix node.
If \bad{} information is forwarded, incoming packets can be linked to outgoing ones and the mix network's privacy guarantees no longer hold.
As the first step in the analysis, we need to label the inputs.
The incoming packet (\(\alpha, \beta, \gamma, \delta\)) is \bad{}, as it was computed by and is therefore known to the previous node.
The shared secret \(s\) is also labeled \bad{}, as disclosing it would enable an adversary to directly derive the outgoing packet from the incoming one themselves.
The next headers \(\alpha', \beta', \gamma'\) are labeled \good{}, as they result from cryptographic operations and are not distinguishable from randomness, assuming \(s\) is know known.
The \textit{salt} is labeled \good{}, as it is a publicly known constant.
The \text{next} node is labeled \good{}, as it is inherently revealed by the node when forwarding the packet.

The shared secret \(s\) and \textit{salt} are concatenated with \instr{ConcatBytes} and \instr{Hash}ed.
Neither operation changes the \bad{} label of \(s\), so the output of \instr{Hash} remains \bad{}.
The \bad{} \(\delta\) and the \bad{} hash output are input into \instr{Decrypt}.
Due to the cipher's confidentiality guarantees, the output of \instr{Decrypt} changes to \good{}.
The \textsf{next} node is input into \instr{Load}, which does not change the label.
Finally, \instr{Forward} receives the output of \instr{Decrypt}, \instr{Load}, as well as \(\alpha'\), \(\beta'\), and \(\gamma'\).
All inputs of \instr{Forward} are labeled \good{}, therefore the mix program contains no unintended information flow.
A graph representation of this information flow analysis can be found in \Cref{fig:infoflow}.

\begin{figure}
    \centering
    \begin{tikzpicture}[
        innode/.style={draw,minimum width=1cm,minimum height=0.6cm},
        callnode/.style={draw,rounded corners,fill=gray!20,inner sep=2mm},
        labelnode/.style={inner sep=-0.3mm,fill=white},
        label2node/.style={inner sep=-0.4mm,rounded corners,fill=gray!20}
    ]
        \node[innode] (0) at (0,0) {\(\alpha\)};
        \node[innode] (1) at (1.2,0) {\(\beta\)};
        \node[innode] (2) at (2.4,0) {\(\gamma\)};
        \node[innode] (3) at (3.6,0) {\(\delta\)};
        
        \node[innode] (4) at (4.8,0) {\(\alpha'\)};
        \node[innode] (5) at (6,0) {\(\beta'\)};
        \node[innode] (6) at (7.2,0) {\(\gamma'\)};
        \node[innode] (7) at (8.4,0) {\(s\)};
        
        \node[innode] (8) at (9.6,0) {\textit{salt}};
        \node[innode] (9) at (10.8,0) {\textit{next}};

        \node[] (10) at (1.8,0.6) {incoming packet};
        \node[] (11) at (6.6,0.6) {preprocessing};
        \node[] (12) at (10.2,0.6) {mix program};

        \draw[] ([yshift=1mm]0.north west) |- (10) -| ([yshift=1mm]3.north east);
        \draw[] ([yshift=1mm]4.north west) |- (11) -| ([yshift=1mm]7.north east);
        \draw[] ([yshift=1mm]8.north west) |- (12) -| ([yshift=1mm]9.north east);

        \node[callnode] (13) at (9,-1) {\instr{ConcatBytes}};
        \draw[->] (7) -- ([xshift=-3mm]13.north);
        \draw[->] (8) -- ([xshift=3mm]13.north);
        
        \node[callnode] (14) at (9,-2) {\instr{Hash}};
        \draw[->] (13) -- (14);
        
        \node[callnode] (15) at (10.8,-3) {\instr{Load}};
        \draw[->] (9) -- (15);
        
        \node[callnode] (16) at (3.6,-2) {\instr{Decrypt}};
        \draw[->] (3) -- (16);
        \draw[->] (14) -- (16);
        
        \node[callnode] (17) at (6,-3) {\instr{Forward}};
        \draw[->] (4) -- ([xshift=-3mm]17.north);
        \draw[->] (5) -- (17.north);
        \draw[->] (6) -- ([xshift=3mm]17.north);
        \draw[->] (16) |- (17);
        \draw[->] (15) -- (17);

        %labels
        green!66!blue
        \node[labelnode] (l0) at ([xshift=-1mm,yshift=-1mm]0.north east) {\textcolor{red!66!yellow}{\faEye}};
        \node[labelnode] (l1) at ([xshift=-1mm,yshift=-1mm]1.north east) {\textcolor{red!66!yellow}{\faEye}};
        \node[labelnode] (l2) at ([xshift=-1mm,yshift=-1mm]2.north east) {\textcolor{red!66!yellow}{\faEye}};
        \node[labelnode] (l3) at ([xshift=-1mm,yshift=-1mm]3.north east) {\textcolor{red!66!yellow}{\faEye}};
        
        \node[labelnode] (l4) at ([xshift=-1mm,yshift=-1mm]4.north east) {\textcolor{green!66!blue}{\faEyeSlash}};
        \node[labelnode] (l5) at ([xshift=-1mm,yshift=-1mm]5.north east) {\textcolor{green!66!blue}{\faEyeSlash}};
        \node[labelnode] (l6) at ([xshift=-1mm,yshift=-1mm]6.north east) {\textcolor{green!66!blue}{\faEyeSlash}};
        \node[labelnode] (l7) at ([xshift=-1mm,yshift=-1mm]7.north east) {\textcolor{red!66!yellow}{\faEye}};
        
        \node[labelnode] (l8) at ([xshift=-1mm,yshift=-1mm]8.north east) {\textcolor{green!66!blue}{\faEyeSlash}};
        \node[labelnode] (l9) at ([xshift=-1mm,yshift=-1mm]9.north east) {\textcolor{green!66!blue}{\faEyeSlash}};
        
        \node[label2node] (l13) at ([xshift=-1mm,yshift=-1mm]13.north east) {\textcolor{red!66!yellow}{\faEye}};
        \node[label2node] (l14) at ([xshift=-1mm,yshift=-1mm]14.north east) {\textcolor{red!66!yellow}{\faEye}};
        \node[label2node] (l15) at ([xshift=-1mm,yshift=-1mm]15.north east) {\textcolor{green!66!blue}{\faEyeSlash}};
        \node[label2node] (l16) at ([xshift=-1mm,yshift=-1mm]16.north east) {\textcolor{green!66!blue}{\faEyeSlash}};
        
        \node[label2node] (ldone) at (17.south) {\textcolor{green!66!blue}{\faCheckCircle}};
    \end{tikzpicture}
    \caption{Information flow analysis for the Sphinx mix program in \prot{}. \textcolor{red!66!yellow}{\faEye} denotes a \bad{} label, while \textcolor{green!66!blue}{\faEyeSlash} denotes a \good{} label.}
    \label{fig:infoflow}
\end{figure}

\subsection{Mix node security}
\label{sec:security:node}
In terms of mix node security, we identify three potential harms:
Malicious users can try to \emph{exfiltrate} secrets from the mix node, such as its private key or information about other messages.
Users can also try to \emph{control} the node in order to use it for other malicious purposes, like becoming a part of a botnet.
Finally, users can try to \emph{jam} the node to prevent it from further work.

\prot{} protects against exfiltration, as the mix programs have no access to secret key material:
The only value that depends on the secret key is the shared secret, which is different from packet to packet.
However, the (malicious) sender already knows the shared secret, thus this provides no benefit to them.
Thanks to the Diffie-Hellman key exchange, it is not possible to learn the secret key based on the shared secret.
All other values that a mix program has access to come directly from the packet, and reveal no information about other packets that the mix node processes.

\prot{} also protects against malicious control and jamming (denial-of-service):
The instruction set is deliberately restricted to prevent arbitrary interactions with other systems.
It does not allow custom network access or resource acquisition.
Additionally, infinite loops are not possible, and the execution time of each program is bounded (see also \Cref{sec:eval:computation} and \Cref{tab:computation:instruction}).
The node can further put limits on the execution time or memory consumption per program to prevent a client from using up the node's resources.

\section{Evaluation}
\label{sec:evaluation}
In the following, we provide an empirical evaluation of \prot{}.
We first describe our implementation of \prot{}, then evaluate performance in terms of bandwidth (\Cref{sec:eval:bandwidth}) and computation (\Cref{sec:eval:computation}).

\paragraph{Implementation}
We implemented \prot{} in Java.
For all cryptographic operations, we rely on BouncyCastle\footnote{\url{https://www.bouncycastle.org/}} v1.70.
We implement \instr{Hash} as SHA-256, \instr{Mac} as HMAC-SHA256, \instr{Encrypt} and \instr{Decrypt} as LIONESS~\cite{anderson1996}, and \instr{Prg} as an AES-CTR keystream.
As the group \(\group\) we use the \texttt{secp224r1} NIST elliptic curve.
We publish our implementation at \url{https://github.com/kit-ps/OmniSphinx}.

\subsection{Bandwidth}
\label{sec:eval:bandwidth}
Our goal is to determine how much bandwidth overhead emulation in \prot{} introduces versus the native mix formats.
\prot{} packets differ from other Sphinx-like packets in the size of their \(\beta\) header, where \prot{} includes the mix programs.
Within each instance, the size of \(\beta\) is determined by a public protocol parameter, otherwise packets can be linked via their header size.
We want to determine the minimum size of \(\beta\) in \prot{} to emulate state of the art mix formats and compare the resulting header size to the header size of the native format.

We compare \prot{} against Sphinx, AE-Sphinx, EROR, MultiSphinx, and PolySphinx.
Native header sizes are derived from the respective format's description.
For the minimum \prot{} \(\beta\), we implement a reference mix program for each tested format.
For MultiSphinx and PolySphinx, we assume a replication factor of \(p=3\).
For all experiments, we assume path lengths of 5 and a security parameter \(\kappa = \qty{16}{\byte}\).
We present the total header size.

For \prot{}, the header has to contain a mix program, whereas it only has to hold the routing information for the native protocols.
We thus expect the \prot{} header to be larger than the native headers in all cases.
We further expect the smallest difference in header size between \prot{} and Sphinx, as most of Sphinx's processing is already handled by \prot{}'s preprocessing, resulting in a short mix program (see \Cref{sec:programs:sphinx}).

\begin{table}[h]
    \centering
    \begin{tabularx}{0.8\textwidth}{lXXll}\toprule
         Mix format & Native & Emulated & \multicolumn{2}{l}{Difference}\\\midrule
         Sphinx & \pz{}\qty{205}{\byte} & \pz{}\qty{273}{\byte} & +\pz{}\qty{68}{\byte} & \textcolor{gray}{+\pz{}33\%} \\
         AE-Sphinx & \pz{}\qty{205}{\byte} & \pz{}\qty{465}{\byte} & +\qty{260}{\byte} & \textcolor{gray}{+127\%} \\
         EROR & \pz{}\qty{485}{\byte} & \pz{}\qty{725}{\byte} & +\qty{240}{\byte} & \textcolor{gray}{+\pz{}49\%} \\\addlinespace
         MultiSphinx (\(p=3\)) & \pz{}\qty{205}{\byte} & \pz{}\qty{489}{\byte} & +\qty{284}{\byte} & \textcolor{gray}{+139\%} \\ 
         PolySphinx (\(p=3\)) & \qty{1256}{\byte} & \qty{1545}{\byte} & +\qty{289}{\byte} & \textcolor{gray}{+\pz{}23\%} \\\bottomrule
    \end{tabularx}
    \caption{Minimum header size of \prot{} emulations versus existing mix formats.}
    \label{tab:bandwidth}
\end{table}

\Cref{tab:bandwidth} presents our results and confirms our expectations:
\prot{} headers are between \qty{68}{\byte} and \qty{289}{\byte} larger than native headers.
The smallest difference is observed with Sphinx.
Native MultiSphinx, AE-Sphinx and EROR re-use the MAC in the header to secure the integrity of the payload, however in OmniSphinx, this MAC is computed as part of the preprocessing and cannot be influenced.
Therefore, those formats have to include an extra MAC in the header, leading to an increased size.
Emulated PolySphinx headers have a large absolute increase over native ones, but since the headers are large to begin with, the relative increase is small.

While \prot{} produces significantly larger headers compared to native formats, the increase is small in absolute numbers.
To emulate all evaluated formats, a \(\beta\) size of \qty{1545}{\byte} has to be used for all packets.
Then, emulating Sphinx introduces the largest overhead with \qty{1340}{\byte} per packet over the native format assuming five hops. 
For a standard payload size of \qty{2}{\kibi\byte}, this results at worst in an increase of packet size of 61\%.

\subsection{Computation}
\label{sec:eval:computation}
Like packet size, the computational overhead of \prot{} is dictated by the mix program.
We divide the following evaluation into two parts:
Our first goal is to evaluate the computational overhead of the base packet format.
Then, we want evaluate the computational overhead of individual \prot{} instructions, so that one can gauge the overhead that complex mix programs incur.

We measure package creation time, processing at a intermediate mix node, and at an exit node for \prot{} packets.
Since the mix node requires \emph{some} mix program to process the packet, we use the Sphinx mix program as described in \Cref{sec:programs:sphinx}.
This gives us the opportunity to compare \prot{}'s computational overhead to Sphinx, for which we use the \texttt{java-sphinx}\footnote{\url{https://github.com/rsoultanaev/java-sphinx}, by Robert Soultanaev} implementation.
All measurements were obtained on a machine equipped with an AMD Ryzen 5 5625U  and 16 GiB of RAM, using the \texttt{jmh}\footnote{\url{https://openjdk.org/projects/code-tools/jmh/}} benchmarking framework.
For all measurements, we use a payload size of 1 KiB and a path length of five hops.

Generally, we expect processing times of \prot{} to be higher that those of native Sphinx, due to the additional interpretation of the mix program.
We do not expect large differences, as the computationally heavy cryptographic operations are the same in both formats, and in OmniSphinx, they are represented by a single instruction.

\begin{table}[h]
    \centering
    \begin{tabularx}{0.95\textwidth}{lXXX}\toprule
        Format        & Packet Creation & \multicolumn{2}{c}{Mix Processing}\\
                      &                 & Intermediate Node & Exit Node\\\midrule
        Native Sphinx & \(\qty{1.16}{\ms}\pm\qty{0.00}{\ms}\) & \(\qty{198.29}{\us}\pm\qty{0.32}{\us}\)   & \(\qty{194.17}{\us}\pm\qty{0.28}{\us}\)\\
        \prot{}    & \(\qty{1.12}{\ms}\pm\qty{0.00}{\ms}\)   & \(\qty{283.32}{\us}\pm\qty{1.29}{\us}\)     & \(\qty{280.52}{\us}\pm\qty{1.43}{\us}\)\\\bottomrule
    \end{tabularx}
    \caption{Processing times of \prot{} versus native Sphinx.} 
    \label{tab:computation:base}
\end{table}

\Cref{tab:computation:base} presents our results and confirms the expectations:
Package creation has the same performance for Sphinx and OmniSphinx, as this part is implemented in native Java for both protocols.
For package processing, we see that OmniSphinx is slower by approximately \qty{90}{\us}.
However, the total time needed to process packets is still negligible compared to overheads induced by network latencies.

\begin{table}[h]
    \centering
    \begin{tabularx}{0.95\textwidth}{XXXXXXXXXXXXXXXXXXX}\toprule
        & \rlap{\rotatebox{60}{\instr{Stop}}}
        & \rlap{\rotatebox{60}{\instr{CutBytes}}}
        & \rlap{\rotatebox{60}{\instr{Hash}}}
        & \rlap{\rotatebox{60}{\instr{MAC}}}
        & \rlap{\rotatebox{60}{\instr{Verify}}}
        & \rlap{\rotatebox{60}{\instr{Exponent}}}
        & \rlap{\rotatebox{60}{\instr{Pad}}}
        & \rlap{\rotatebox{60}{\instr{Prg}}}
        & \rlap{\rotatebox{60}{\instr{XOR}}}
        & \rlap{\rotatebox{60}{\instr{Decrypt}}}
        & \rlap{\rotatebox{60}{\instr{Encrypt}}}
        & \rlap{\rotatebox{60}{\instr{Concat}}}
        & \rlap{\rotatebox{60}{\instr{ConcatBytes}}}
        & \rlap{\rotatebox{60}{\instr{Copy}}}
        & \rlap{\rotatebox{60}{\instr{Add}}}
        & \rlap{\rotatebox{60}{\instr{Forward}}}
        & \rlap{\rotatebox{60}{\instr{Load}}}
        & \rlap{\rotatebox{60}{\instr{ForLoop}}}
        \\\midrule
    
        \rotatebox{90}{\textbf{\quad Average time}}
        & \rotatebox{90}{\(\pz{}\pz{}\qty{1.46}{\us}\pm\qty{0.00}{\us}\)}
        & \rotatebox{90}{\(\pz{}\pz{}\qty{1.54}{\us}\pm\qty{0.01}{\us}\)}
        & \rotatebox{90}{\(\pz{}\pz{}\qty{2.19}{\us}\pm\qty{0.01}{\us}\)}
        & \rotatebox{90}{\(\pz{}\pz{}\qty{3.34}{\us}\pm\qty{0.13}{\us}\)}
        & \rotatebox{90}{\(\pz{}\pz{}\qty{1.43}{\us}\pm\qty{0.01}{\us}\)}
        & \rotatebox{90}{\(\qty{153.11}{\us}\pm\qty{0.25}{\us}\)}
        & \rotatebox{90}{\(\pz{}\pz{}\qty{1.54}{\us}\pm\qty{0.01}{\us}\)}
        & \rotatebox{90}{\(\pz{}\pz{}\qty{2.02}{\us}\pm\qty{0.01}{\us}\)}
        & \rotatebox{90}{\(\pz{}\pz{}\qty{1.47}{\us}\pm\qty{0.00}{\us}\)}
        & \rotatebox{90}{\(\pz{}\pz{}\qty{4.13}{\us}\pm\qty{0.02}{\us}\)}
        & \rotatebox{90}{\(\pz{}\pz{}\qty{4.26}{\us}\pm\qty{0.03}{\us}\)}
        & \rotatebox{90}{\(\pz{}\pz{}\qty{1.48}{\us}\pm\qty{0.00}{\us}\)}
        & \rotatebox{90}{\(\pz{}\pz{}\qty{1.49}{\us}\pm\qty{0.00}{\us}\)}
        & \rotatebox{90}{\(\pz{}\pz{}\qty{1.48}{\us}\pm\qty{0.01}{\us}\)}
        & \rotatebox{90}{\(\pz{}\pz{}\qty{1.49}{\us}\pm\qty{0.00}{\us}\)}
        & \rotatebox{90}{\(\pz{}\pz{}\qty{1.48}{\us}\pm\qty{0.00}{\us}\)}
        & \rotatebox{90}{\(\pz{}\pz{}\qty{1.52}{\us}\pm\qty{0.01}{\us}\)}
        & \rotatebox{90}{\(\pz{}\pz{}\qty{1.51}{\us}\pm\qty{0.01}{\us}\)}
        \\\bottomrule
    \end{tabularx}
    \caption{Execution time per \prot{} instruction.}
    \label{tab:computation:instruction}
\end{table}

For the per-instruction cost, we expect all instructions that simply move bytes between registers to be fast.
We expect \instr{MAC}, \instr{Hash}, \instr{Encrypt}, \instr{Decrypt}, \instr{Prg}, and \instr{Exponent} to be slower, as those rely on computationally expensive cryptography.

Our results in \Cref{tab:computation:instruction} confirm our expectations.
All byte-moving operations complete in approximately \qty{1.5}{\us}.
The symmetric primitives \instr{MAC}, \instr{Hash}, \instr{Encrypt}, and \instr{Decrypt} are slower, but only by a factor of 2--3.
The slowest operation is \instr{Exponent}, which represents a public-key cryptographic operation.

\section{Conclusion}
\label{sec:conclusion}
In this work, we investigated if techniques from active networking can provide meaningful benefits in the context of mix networks.
To investigate this question, we implemented \prot{}, a novel mix format, where senders can embed custom mix programs in their packets.
The mix programs are executed by each node on the packet's path and determine how the packet is processed.
With \prot{}, an active mix network can be realized, where mix node operators no longer have to restrict themselves to supporting a single mix format only.
As expected, emulation in \prot{} incurs higher overhead than native execution of the corresponding mix format, but in typical mix network use cases, such as email communication, the total overhead remains manageable:
Sphinx packet processing time increases from approximately \(\qty{200}{\us}\) to approximately \(\qty{300}{\us}\), and headers grow by 33\%.
At the same time, we see concrete benefits of active mix networks over classical ones:
Clients with different needs regarding the mix network's functionality can share the same instance, resulting in better node utilization for operators and a larger choice of nodes for clients.
To further amplify \prot{}'s advantages in practice, future research could determine how auxiliary infrastructure such as directory authorities can be unified across different networks.
In addition to real-world use, active mix networks may become a valuable tool for research in the future, as new mix formats are easy to implement and to deploy without having to rely on node operators or separate testing networks.

\begin{credits}
\subsubsection{\ackname}
We thank the anonymous reviewers for their valuable feedback.

This work has in part been funded by the Helmholtz Association through the KASTEL Security Research Labs (HGF Topic 46.23) and by the German Research Foundation (DFG, Deutsche Forschungsgemeinschaft) as part of Germany’s Excellence Strategy – EXC 2050/2 – Project ID 390696704 – Cluster of Excellence \enquote{Centre for Tactile Internet with Human-in-the-Loop} (CeTI) of Technische Universität Dresden.

\subsubsection{\discintname}
The authors have no competing interests to declare that are
relevant to the content of this article.
\end{credits}
%
% ---- Bibliography ----
%
% BibTeX users should specify bibliography style 'splncs04'.
% References will then be sorted and formatted in the correct style.
%
\bibliographystyle{splncs04}
\bibliography{omnisphinx}

@article{chaum1981,
  title = {Untraceable {{Electronic Mail}}, {{Return Addresses}}, and {{Digital Pseudonyms}}},
  author = {Chaum, David L.},
  year = {1981},
  journal = {Commun. ACM},
}

@inproceedings{danezis2009,
  title = {Sphinx: {{A Compact}} and {{Provably Secure Mix Format}}},
  booktitle = {{{IEEE S}}\&{{P}}},
  author = {Danezis, George and Goldberg, Ian},
  year = {2009}
}

@inproceedings{hugenroth2021,
  title = {Rollercoaster: {{An Efficient Group-Multicast Scheme}} for {{Mix Networks}}},
  booktitle = {{{USENIX Security}}},
  author = {Hugenroth, Daniel and Kleppmann, Martin and Beresford, Alastair R.},
  year = {2021}
}

@inproceedings{schadt2024,
  title = {{{PolySphinx}}: {{Extending}} the {{Sphinx Mix Format With Better Multicast Support}}},
  booktitle = {{{IEEE S}}\&{{P}}},
  author = {Schadt, Daniel and Coijanovic, Christoph and Weis, Christiane and Strufe, Thorsten},
  year = {2024},
}

@inproceedings{kuhn2020,
  title = {Breaking and ({{Partially}}) {{Fixing Provably Secure Onion Routing}}},
  booktitle = {{{IEEE S}}\&{{P}}},
  author = {Kuhn, Christiane and Beck, Martin and Strufe, Thorsten},
  year = {2020},
}

@inproceedings{DBLP:conf/wpes/RochetDE24,
  author       = {Florentin Rochet and
                  Jules Dejaeghere and
                  Tariq Elahi},
  title        = {Towards Flexible Anonymous Networks},
  booktitle    = {{WPES}},
  year         = {2024},
}

@inproceedings{DBLP:conf/uss/DingledineMS04,
  author       = {Roger Dingledine and
                  Nick Mathewson and
                  Paul F. Syverson},
  title        = {Tor: The Second-Generation Onion Router},
  booktitle    = {{USENIX} Security},
  year         = {2004},
}

@inproceedings{DBLP:conf/sigcomm/ReiningerAHFHGL21,
  author       = {Michael Reininger and
                  Arushi Arora and
                  Stephen Herwig and
                  Nicholas Francino and
                  Jayson Hurst and
                  Christina Garman and
                  Dave Levin},
  title        = {Bento: safely bringing network function virtualization to {Tor}},
  booktitle    = {{ACM} {SIGCOMM}},
  year         = {2021},
}

@inproceedings{DBLP:conf/ccs/ReiningerAHFGL20,
  author       = {Michael Reininger and
                  Arushi Arora and
                  Stephen Herwig and
                  Nicholas Francino and
                  Christina Garman and
                  Dave Levin},
  title        = {Bento: Bringing Network Function Virtualization to {Tor}},
  booktitle    = {{CCS}},
  year         = {2020},
}

@article{wails2023proteus,
  title={Proteus: Programmable protocols for censorship circumvention},
  author={Wails, Ryan and Jansen, Rob and Johnson, Aaron and Sherr, Micah},
  journal={Free and Open Communications on the Internet},
  year={2023}
}

@inproceedings{rial2025outfox,
  title={Outfox: a Postquantum Packet Format for Layered Mixnets},
  author={Rial, Alfredo and Piotrowska, Ania and Halpin, Harry},
  booktitle={{WPES}},
  year={2025}
}

@misc{cryptoeprint:2024/020,
      author = {Michael Klooß and Andy Rupp and Daniel Schadt and Thorsten Strufe and Christiane Weis},
      title = {{EROR}: Efficient Repliable Onion Routing with Strong Provable Privacy},
      howpublished = {Cryptology {ePrint} Archive, Paper 2024/020},
      year = {2024},
      url = {https://eprint.iacr.org/2024/020}
}

@article{tennenhouse1996towards,
  title={Towards an active network architecture},
  author={Tennenhouse, David L and Wetherall, David J},
  journal={ACM SIGCOMM Computer Communication Review},
  year={1996},
}

@inproceedings{bhattacharjee1997architecture,
  title={An architecture for active networking},
  author={Bhattacharjee, Samrat and Calvert, Kenneth L and Zegura, Ellen W},
  booktitle={International Conference on High Performance Networking},
  year={1997},
}

@article{alexander2002switchware,
  title={The {SwitchWare} active network architecture},
  author={Alexander, D Scott and Arbaugh, William A and Hicks, Michael W and Kakkar, Pankaj and Keromytis, Angelos D and Moore, Jonathan T and Gunter, Carl A and Nettles, Scott M and Smith, Jonathan M},
  journal={IEEE Network},
  year={2002},
}

@article{feamster2014road,
  title={The road to {SDN}: an intellectual history of programmable networks},
  author={Feamster, Nick and Rexford, Jennifer and Zegura, Ellen},
  journal={ACM SIGCOMM Computer Communication Review},
  year={2014},
}

@article{DBLP:journals/cacm/DenningD77,
  author       = {Dorothy E. Denning and
                  Peter J. Denning},
  title        = {Certification of Programs for Secure Information Flow},
  journal      = {Commun. {ACM}},
  year         = {1977},
}

@article{sabelfeld2003language,
  title={Language-based information-flow security},
  author={Sabelfeld, Andrei and Myers, Andrew C},
  journal={IEEE J. Sel. Areas Commun.},
  year={2003},
}

@inproceedings{beato2016,
  title = {Improving the {{Sphinx Mix Network}}},
  booktitle = {Cryptology and {{Network Security}}},
  author = {Beato, Filipe and Halunen, Kimmo and Mennink, Bart},
  year = {2016}
}

@inproceedings {dyer2015,
author = {Kevin P. Dyer and Scott E. Coull and Thomas Shrimpton},
title = {Marionette: A Programmable Network Traffic Obfuscation System},
booktitle = {USENIX Security},
year = {2015},
}

@article{scherer2024a,
  title = {Provable {{Security}} for the {{Onion Routing}} and {{Mix Network Packet Format Sphinx}}},
  author = {Scherer, Philip and Weis, Christiane and Strufe, Thorsten},
  year = {2024},
  journal = {Proc. Priv. Enhancing Technol.},
  urldate = {2025-04-02}
}

@incollection{anderson1996,
  title = {Two Practical and Provably Secure Block Ciphers: {{BEAR}} and {{LION}}},
  booktitle = {Fast {{Software Encryption}}},
  author = {Anderson, Ross and Biham, Eli},
  year = {1996},
}

\appendix

\section{Concrete mix programs}
\label{apx:programs}
In this section, we provide concrete mix programs for the formats described in \Cref{sec:programs}.
Inputs to the program are embedded by the sender as byte string constants.
For Sphinx, we show the relay program in \Cref{apx:pseudo:sphinx-relay} and the exit program in \Cref{apx:pseudo:sphinx-exit}.
For PolySphinx, we show the relay program in \Cref{apx:pseudo:poly-relay}, the replication program in \Cref{apx:pseudo:poly-repl}, and the exit program in \Cref{apx:pseudo:poly-exit}.

\begin{algorithm}
    \caption{Sphinx relay program}
    \label{apx:pseudo:sphinx-relay}
    \inp{Next node \(n\)} \\
    \cmt{Decrypt payload} \\
    \(\instr{ConcatByte}(r_s, \texttt{0x01}, r_s)\) \\
    \(\instr{Hash}(r_s, r_h)\) \\
    \(\instr{Decrypt}(r_h, r_\delta, r_\delta)\) \\
    \cmt{Forward} \\
    \(\instr{Load}(n, r_n)\) \\
    \(\instr{Forward}(r_n)\) \\
    \(\instr{Stop}\)
\end{algorithm}

\begin{algorithm}
    \caption{Sphinx exit program}
    \label{apx:pseudo:sphinx-exit}
    \cmt{Decrypt payload} \\
    \(\instr{ConcatByte}(r_s, \texttt{0x01}, r_s)\) \\
    \(\instr{Hash}(r_s, r_h)\) \\
    \(\instr{Decrypt}(r_h, r_\delta, r_\delta)\) \\
    \cmt{Verify integrity} \\
    \(\instr{CreateZeroes}(\secpar, r_z)\) \\
    \(\instr{CutBytes}(r_\delta, \secpar, r_y)\) \\
    \(\instr{IsEqual}(r_z, r_y)\) \\
    \cmt{Deliver message} \\
    \(\instr{CutBytes}(r_\delta, 2\secpar, r_n)\) \\
    \(\instr{Forward}(r_n)\) \\
    \(\instr{Stop}\)
\end{algorithm}

\begin{algorithm}
    \caption{PolySphinx relay program}
    \label{apx:pseudo:poly-relay}
    \inp{Next node \(n\), node key \(\sigma\)} \\ 
    \(\instr{Load}(\sigma, r_k)\) \\
    \(\instr{Encrypt}(r_k, r_\delta, r_\delta)\) \\
    \(\instr{Load}(n, r_n)\) \\
    \(\instr{Forward}(r_n)\) \\
    \(\instr{Stop}\)
\end{algorithm}

\begin{algorithm}
    \caption{PolySphinx replication program}
    \label{apx:pseudo:poly-repl}
    \inp{Replication factor \(p\), concatenated subheaders \(B\)} \\
    \(\instr{Copy}(r_\delta, r_\Delta)\) \\
    \(\instr{Load}(B, r_B)\) \\
    \cmt{Send \(p\) replicated packets} \\
    \(\instr{ForLoop}(7, p)\) \\
    \null\quad\(\instr{CutBytes}(r_B, 2\kappa, r_n)\) \\
    \null\quad\(\instr{CutBytes}(r_B, \kappa, r_K)\) \\
    \null\quad\(\instr{CutBytes}(r_B, 2\kappa, r_\alpha)\) \\
    \null\quad\(\instr{CutBytes}(r_B, \kappa, r_\gamma)\) \\
    \null\quad\(\instr{CutBytes}(r_B, \tau_\text{Post}, r_\beta)\) \\
    \null\quad\(\instr{Encrypt}(r_K, r_\Delta, r_\delta)\) \\
    \null\quad\(\instr{Forward}(r_n)\) \\
    \(\instr{Stop}\)
\end{algorithm}

\begin{algorithm}
    \caption{PolySphinx exit program}
    \label{apx:pseudo:poly-exit}
    \inp{Replication factor \(p\), key tree seed \(S\), path \(P\), recipient \(R\)} \\
    \(\instr{Load}(S, r_S)\) \\
    \(\instr{Load}(P, r_P)\) \\
    \(\instr{Load}(R, r_R)\) \\
    \cmt{Root of the key tree} \\
    \(\instr{Hash}(r_S, r_\sigma)\) \\
    \(\instr{Hash}(r_\sigma, r_K)\) \\
    \cmt{Build path through the key tree} \\
    \(\instr{ForLoop}(4, p)\) \\
    \null\quad\(\instr{CutBytes}(r_P, 1, r_i)\) \\
    \null\quad\(\instr{Add}(r_\sigma, r_i, r_\sigma)\) \\
    \null\quad\(\instr{Hash}(r_\sigma, r_\sigma)\) \\
    \null\quad\(\instr{Concat}(r_\sigma, r_K, r_K)\) \\
    \cmt{Decrypt the payload} \\
    \(\instr{ForLoop}(2, p + 1)\) \\
    \null\quad\(\instr{CutBytes}(r_K, \kappa, r_\sigma)\) \\
    \null\quad\(\instr{Decrypt}(r_\sigma, r_\delta, r_\delta)\) \\
    \(\instr{Forward}(r_R)\) \\
    \(\instr{Stop}\)
\end{algorithm}

\end{document}